\definecolor{dullmagenta}{rgb}{0.4,0,0.4}   
\definecolor{darkblue}{rgb}{0,0,0.4}
\def\equationautorefname~#1\null{\textrm{Eq.~(#1)}\null}
\def\figureautorefname~#1\null{\textrm{Fig.~#1}\null}
\def\tableautorefname~#1\null{\textrm{Tab.~#1}\null}
\def\sectionautorefname~#1\null{\textrm{Sec.~#1\;}\null}
\def\subsectionautorefname~#1\null{\textrm{Sec.~#1\;}\null}
\def\subsubsectionautorefname~#1\null{\textrm{Sec.~#1\;}\null}
\def\pageautorefname~#1\null{\textrm{page~#1\;} \null}
\newcommand{\INCREMENT}{\mathrm{INC}}
\newcommand{\Identity}{\Id}
\newcommand{\DQC}{\mathrm{DQC}}
\author{Seiichiro Tani\\
NTT Communication Science Laboratories, NTT Corporation, Japan \and International Research Frontiers Initiative (IRFI),  Tokyo Institute of Technology, Japan\\
\href{mailto:seiichiro.tani.cs@hco.ntt.co.jp}{seiichiro.tani.cs@hco.ntt.co.jp} 
}
\date{}
\title{Space-Bounded Unitary Quantum Computation with Postselection}
\begin{document}
\pagestyle{plain}
\sloppy
\maketitle
\begin{abstract}
Space-bounded computation has been a central topic in classical and quantum complexity theory.
In the quantum case, every elementary gate must be unitary. This restriction makes it unclear whether the power of  space-bounded computation changes by allowing intermediate measurement. In the bounded error case, Fefferman and Remscrim [STOC 2021, pp.1343--1356] 
and Girish, Raz and Zhan~[ICALP 2021, pp.73:1--73:20] recently provided the break-through results that the power does not change. This paper shows that a similar result holds for space-bounded quantum computation with \emph{postselection}. Namely,  it is proved possible to eliminate intermediate postselections and measurements in the space-bounded quantum computation in the bounded-error setting. Our result strengthens the recent result by 
Le Gall, Nishimura and Yakaryilmaz~[TQC 2021, pp.10:1--10:17] that logarithmic-space bounded-error quantum computation with \emph{intermediate} postselections and measurements is equivalent in computational power to logarithmic-space unbounded-error probabilistic computation. As an application, it is shown that  bounded-error space-bounded one-clean qubit computation (DQC1) with postselection is equivalent in computational power to  unbounded-error space-bounded probabilistic computation,
and  the computational supremacy of the bounded-error space-bounded DQC1 is interpreted in complexity-theoretic terms.
\end{abstract}

\section{Introduction}
\label{sec:intro}
\subsection{Background}
Space-bounded computation is one of the most fundamental topics in complexity theory that have been studied in the classical and quantum settings, since it reflects common practical situations where available memory space is much less than input size. Watrous~\cite{Wat01JCSS,Wat03CC} initiated the study of space-bounded quantum computation based on quantum Turing machines and proved that, in the unbounded-error setting,  space-bounded quantum computation is equivalent in computational power to space-bounded probabilistic computation: $\PrQSPACE(s)=\PrSPACE(s)$ for any space-constructible $s$ with $s(n)\in \Omega(\log n)$.
This and the classical results~\cite{BorCooPip83InfoComp,Jun85ICALP} imply that unbounded-error space-bounded quantum computations can be simulated by deterministic computations with the squared amount of the space used by the former. 

There are some subtleties (see~\cite{MelWat12ToC} for the details) in considering space-bounded quantum computation. The most relevant one is whether we allow intermediate measurements, that is, the measurements made during computation,
which are allowed in, e.g.,~\cite{Wat01JCSS,Wat03CC,TaS13STOC}.
In the case of polynomial-time quantum computation, it is well-known that all intermediate measurements can be deferred to the end of computation 
by coherently copying the state of the qubits to be measured to ancilla qubits, and keeping their contents unchanged through the computation. This may require a polynomial number of ancilla qubits to store the copies, since there may exist a polynomial number of intermediate measurements in the original computation. This is acceptable in the polynomial-time quantum computation. However, this method is not applicable in general 
in the case of space-bounded quantum computation. For instance, if we consider a logarithmic-space quantum computation that runs in polynomial time, the above transformation may require a polynomial number of ancilla qubits, much more than the available space. Thus, it is a fundamental question in space-bounded quantum computation whether it is possible to space-efficiently eliminate intermediate measurements. Recently, Fefferman and Remscrim~\cite{FefRem21STOC} and Girish, Raz and Zhan~\cite{GirRazZha21ICALP} independently provided the breakthrough results that it is possible in the bounded error setting.

Postselection is a fictitious function that projects a quantum state on a single qubit to the prespecified state (say,  $\ket{1}$) with certainty, as far as the former state has a non-zero overlap with the latter.
Since Aaronson~\cite{Aar05RSPA} introduced postselection to the quantum complexity theory field, it has turned out to be very effective concept in the field although it is unrealistic. In particular, Aaronson succeeded in characterizing a classical complexity class in a quantum way by introducing postselection: 
$\PP=\PostBQP$~\cite{Aar05RSPA}. This characterization gives one-line proofs of the classical results~\cite{BeiReiSpi95JCSS,ForRei96InfoComp}, for which only involved classical proofs had been known, and has been a foundation for establishing the quantum computational supremacy of subuniversal quantum computation models (e.g., \cite{BreJozShe10PRS,AarArk13TOC,MorFujFit14PRL}) under complexity-theoretic assumptions. Another example of characterizing classical complexity classes with postselection is that  $\PSPACE$ is equal to $\PostQMA$~\cite{MorNis17QIC}, the class of languages that can be recognized by quantum Merlin-Arthur proof systems with polynomial-time quantum verifier with the ability of postselection. Along this line of work,  Le Gall et al.~\cite{LeGNisYak21TQC} recently considered logarithmic-space quantum computation with postselection in the bounded-error setting and proved  that  its associated complexity class $\PostBQL$ is equivalent to $\PL$, the class of languages that can be recognized with unbounded error by logarithmic-space probabilistic computation. This beautiful result can be regarded as the equivalent of $\PP=\PostBQP$ in the logarithmic-space quantum computation. 
Their model allows intermediate postselections as well as intermediate measurements, which play a key role for space-efficiency since the qubits on which  intermediate postselections or measurements are made can be reused as initialized ancilla qubits for subsequent computation. Thus, a straightforward question is whether it is possible to space-efficiently eliminate intermediate postselections and measurements. Our main result answers this question affirmatively.

\subsection{Our Contribution}
We consider the space-bounded quantum computation that allows postselections and measurements only at the end of computation, which we call space-bounded \emph{unitary} quantum computation with postselection.
Our result informally says that such quantum computation is equivalent in computational power to the space-bounded quantum computation that allows intermediate postselections and measurements.
More concretely, for a space-constructible function $s$ with $s(n)\in \Omega(\log n)$, let $\PostBQSPACE(s)$ be the class of languages that can be recognized with bounded-error by quantum computation with (intermediate) postselections and measurements that uses $O(s)$ qubits
and runs in $2^{O(s)}$ time, and let 
$\PostBQuSPACE(s)$ be the unitary version of  $\PostBQSPACE(s)$. 
By the definition, it holds $\PostBQuSPACE(s)\subseteq \PostBQSPACE(s)$.
We show the converse is also true.
\begin{theorem}\label{th:main}
For any space-constructible function $s$ with $s(n)\in \Omega(\log n)$, it holds that 
\[ \PostBQuSPACE(s)=\PostBQSPACE(s)=\PrSPACE(s). \]
\end{theorem}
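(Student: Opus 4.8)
The plan is to prove the theorem by closing the cycle of inclusions
\[
\PostBQuSPACE(s)\subseteq\PostBQSPACE(s)\subseteq\PrSPACE(s)\subseteq\PostBQuSPACE(s),
\]
whose first link holds by definition, since performing postselections and measurements only at the end is a special case of allowing them during the computation. Hence everything reduces to the second and third links; in particular, one never needs to transform an intermediate-measurement computation directly into a unitary one (which would amount to re-deriving the breakthrough equivalences of Fefferman--Remscrim and Girish--Raz--Zhan), but instead detours through $\PrSPACE(s)$.

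For $\PostBQSPACE(s)\subseteq\PrSPACE(s)$ I would generalize the argument of Le Gall, Nishimura and Yakaryilmaz (descended from Watrous) from logarithmic space to $s$. Put the computation into a normal form in which each step is a gate from a fixed algebraic gate set, a projection of one qubit onto $\ket{1}$ (postselection), or a standard-basis measurement of one qubit; tracking the subnormalized evolution on density operators and vectorizing, the whole run becomes a product of $2^{O(s)}$ linear maps on a $2^{O(s)}$-dimensional space, and the conditional acceptance probability is a ratio $p_{\mathrm{acc}}/p_{\mathrm{post}}$ of entries of such a product. Since $\PrSPACE(s)$ can decide the sign of an integer linear combination of entries of a product of $2^{O(s)}$ explicitly generated $2^{O(s)}$-dimensional matrices --- even one that is only inverse-exponentially separated from zero --- the bounded-error promise lets one decide whether $3p_{\mathrm{acc}}-2p_{\mathrm{post}}>0$, after replacing the algebraic gate entries by sufficiently fine dyadic truncations, within $\PrSPACE(s)$.

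For $\PrSPACE(s)\subseteq\PostBQuSPACE(s)$, the new content, I would first reduce --- using $\PrSPACE(s)=\PrQSPACE(s)$ and the $\mathsf{GapSPACE}$/iterated-matrix-product characterization of probabilistic space --- to the single problem of deciding whether $\langle u| N_T\cdots N_1 |v\rangle>0$, where $T=2^{O(s)}$, each $N_t$ is an explicitly generated substochastic-type $2^{O(s)}$-dimensional matrix arising from the coin flips and reversible transitions of the probabilistic machine, $u,v$ are fixed unit vectors, and the quantity lies in $[-1,1]$ with absolute value at least $2^{-2^{O(s)}}$ whenever nonzero. I would then build a unitary circuit on $O(s)$ qubits that maintains a superposition over a single $O(s)$-qubit ``current index'' register holding the partial product $N_t\cdots N_1 v$, updating it at step $t$ by a unitary dilation of $N_t$ that records a ``failure'' on one flag qubit, with every later step applied controlled on the flag still reading $\ket{0}$ and acting as the identity otherwise; thus ``once failed stays failed'', and a single terminal postselection of the flag onto $\ket{0}$ realizes all $T$ contractions simultaneously while recycling just one ancilla qubit. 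A Hadamard-test-style readout (with $u$ rotated to a standard basis state) followed by that terminal postselection then amplifies the exponentially small value $\langle u| N_T\cdots N_1 |v\rangle$ to a bounded-error gap, exactly as postselection does in Aaronson's proof of $\PP=\PostBQP$.

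The main obstacle will be keeping this construction within $O(s)$ space while preserving the sign of a quantity that can be as small as $2^{-2^{O(s)}}$: since the circuit implements each $N_t$ only approximately, rounding errors compound over the $2^{O(s)}$ sequential steps, and a naive bound would force a precision register of $2^{O(s)}$ qubits. Overcoming this is where the space-efficient, well-conditioned matrix-powering and error-reduction techniques of Fefferman--Remscrim, Girish--Raz--Zhan and Ta-Shma must be imported and adapted: one reorganizes the product (for instance, via a Richardson/Newton-type iteration exploiting the matrix-inversion characterization of $\PrSPACE$, or via repeated squaring with periodic re-conditioning) so that $O(s)$ bits of working precision suffice at each stage, and one checks that the terminal postselection is robust to the residual error --- here the unbounded amplification power of postselection is essential, turning a correctly signed but doubly-exponentially small residual into a bounded-error decision. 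Chaining the three inclusions then yields $\PostBQuSPACE(s)=\PostBQSPACE(s)=\PrSPACE(s)$.
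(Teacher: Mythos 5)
Your overall skeleton matches the paper's: the cycle $\PostBQuSPACE(s)\subseteq\PostBQSPACE(s)\subseteq\PrSPACE(s)\subseteq\PostBQuSPACE(s)$, with the first link by definition, the second link essentially the (generalized) Le Gall--Nishimura--Yakaryilmaz upper bound, and all the new work concentrated in the third link. Your device for deferring the simulating circuit's intermediate postselections --- a dilation that records ``failure'' so that the bad branch never re-interferes with the good one, followed by a single terminal postselection --- is also the same idea as the paper's counter register $\reg{C}$ attached to $Q_x$ to form the unitary $V_x$.

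The genuine gap is in your readout step. You assert that a Hadamard-test followed by the terminal postselection ``amplifies the exponentially small value $\langle u|N_T\cdots N_1|v\rangle$ to a bounded-error gap, exactly as postselection does in Aaronson's proof of $\PP=\PostBQP$.'' But postselection alone does not do this: after postselecting the flag, what you hold is (up to normalization) a single-qubit state of the form $(1/2+p_a)\ket{0}+\delta(1/2-p_a)\ket{1}$ for one value of the scaling $\delta$, and measuring it in the $\{\ket{+},\ket{-}\}$ basis distinguishes $p_a>1/2$ from $p_a<1/2$ only with unbounded error. Aaronson's bounded-error guarantee comes from sweeping $\delta$ over $T+1$ exponentially spaced values $2^{T-k}$, observing that for \emph{some} $k$ the state is bounded away from $\ket{+}$ (resp.\ close to it), and then combining the $T+1$ measurement outcomes classically --- which in the space-bounded setting is exactly the sequence of intermediate measurements you are supposed to eliminate, since you cannot afford $T+1$ parallel copies in $O(s)$ qubits. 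The paper's actual technical core is the pair of subroutines $U_+$ and $U_-$, which run $V_x$ and $V_x^\dagger$ for $k=0,\dots,T$ in sequence while coherently multiplying the squared overlaps, so that the amplitudes of the all-zero state become $\prod_k|\alpha_k|^2$ and $\prod_k|\beta_k|^2$; a Hadamard on one control qubit, a controlled application of $(U_+)^r$ versus $(U_-)^r$, and a single terminal postselection then compare $\prod_k|\alpha_k|^{4r}$ against $\prod_k|\beta_k|^{4r}$, whose ratio is bounded away from $1$ by a constant power because some $k$ gives a constant-factor separation. Your proposal has no analogue of this accumulation step. Conversely, the obstacle you spend most effort on --- compounding precision loss requiring Fefferman--Remscrim/Ta-Shma style well-conditioned matrix powering --- does not arise in the paper's route, because the circuit $Q_x$ inherited from Le Gall et al.\ is built exactly from a fixed finite gate set and the counter/accumulator constructions are exact; no iterative refinement is needed.
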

This strengthens the result of $\PostBQSPACE(s)=\PrSPACE(s)$, which can be derived straightforwardly from the proof of $\PostBQL=\PL$ in~\cite{LeGNisYak21TQC,Nis21Per}.
A special case of \autoref{th:main} with $s=\log n$ is the following corollary, where
we define  $\PostBQL\equiv \PostBQSPACE(\log)$ and  $\PostBQuL\equiv \PostBQuSPACE(\log)$. 
\begin{corollary}
 $\PostBQuL=\PostBQL=\PL$.
\end{corollary}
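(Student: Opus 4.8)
The corollary is Theorem~\ref{th:main} instantiated at $s=\log n$, so there is nothing further to do once the theorem is available; I therefore sketch how I would prove the theorem. The plan is to close the cycle $\PostBQuSPACE(s)\subseteq\PostBQSPACE(s)\subseteq\PrSPACE(s)\subseteq\PostBQuSPACE(s)$. The first inclusion is immediate, the unitary model being the general model with intermediate postselections and measurements forbidden. The second inclusion --- indeed the equality $\PostBQSPACE(s)=\PrSPACE(s)$ --- is obtained by lifting the proof of $\PostBQL=\PL$ of Le~Gall, Nishimura and Yakaryilmaz to space bound $s$: the acceptance probability conditioned on every postselection succeeding is a ratio of two bilinear forms in a product of $2^{O(s)}$-dimensional matrices, namely the superoperators of the gates, measurements and postselection projectors written in a fixed basis, and $\PrSPACE(s)$ can evaluate and compare such ratios because the classical equivalence between $\PL$ and computing the sign of an entry of an iterated product of polynomially-bit-bounded integer matrices scales to $\PrSPACE(s)$ and $2^{O(s)}$-dimensional matrices. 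Thus the entire content of the theorem is the inclusion $\PrSPACE(s)\subseteq\PostBQuSPACE(s)$: eliminating the intermediate postselections and measurements.

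I would prove $\PrSPACE(s)\subseteq\PostBQuSPACE(s)$ in two steps. \emph{Step 1.} Given $L\in\PrSPACE(s)$, by definition together with a standard perturbation there is a probabilistic machine running in space $O(s)$ and time $T=2^{O(s)}$ whose acceptance probability $a(x)$ satisfies $|a(x)-\tfrac12|\ge 2^{-T}\ge 2^{-2^{O(s)}}$ with $x\in L\iff a(x)>\tfrac12$. I would build a \emph{unitary} quantum circuit $V_x$ on $O(s)$ qubits, running in time $2^{O(s)}$, that accepts with probability $a(x)$ up to additive error far below $2^{-2^{O(s)}}$. The difficulty is that a naive coherent simulation of the machine's $2^{O(s)}$ coin tosses would need $2^{O(s)}$ ancilla qubits to store them; this is exactly the obstacle that Fefferman and Remscrim and Girish, Raz and Zhan overcome in the bounded-error setting, and I would reuse their space-efficient coherent-simulation technique --- expressing the $T$-th power of the relevant transition (super)operator through resolvents or contour integrals, or through a unitary ``run-and-tally'' gadget --- verifying that, given $2^{O(s)}$ running time, it attains the inverse-double-exponential additive precision needed here rather than merely constant precision; this is plausible precisely because the technique outputs a circuit that \emph{realizes} the probability rather than an algorithm that computes it.

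\emph{Step 2.} Given $V_x$, which accepts with probability $p$ with $|p-\tfrac12|\ge 2^{-2^{O(s)}}$, I would apply an Aaronson-style postselection amplifier, exactly as in the proof of $\PP\subseteq\PostBQP$ but carried out space-efficiently. Adjoin a register holding an index $i$ ranging over an interval of length $2^{O(s)}$, in uniform superposition, together with $O(1)$ further ancilla qubits; conditioned on $i$, prepare an ancilla qubit with amplitude ratio $2^{i}$; run $V_x$ once; couple its flag qubit to that ancilla by a controlled bit flip; postselect $O(1)$ qubits on $\ket 1$; and measure the index register. Aaronson's analysis then shows that the exponentially wide spread of the weights $2^{i}$ over the exponentially long range of $i$ makes the outcome land in one half of the range with probability a constant above $\tfrac12$ when $p>\tfrac12$, and in the other half with probability a constant above $\tfrac12$ when $p<\tfrac12$. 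The reason everything stays within $O(s)$ qubits is that the index register stores an exponentially large value in only $O(s)$ qubits; preparing the rotation by the angle $\arctan 2^{i}$ to the $2^{O(s)}$ bits of precision it needs, controlled on the index register, fits the $2^{O(s)}$ time budget. A final standard round of independent repetition and majority vote --- itself reducible to the same coherent-simulation framework --- turns the constant gap into $2/3$ versus $1/3$, placing $L$ in $\PostBQuSPACE(s)$.

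The step I expect to be the main obstacle is the space-efficient coherent simulation of Step~1: performing $2^{O(s)}$ measured or probabilistic steps unitarily with only $O(s)$ ancilla qubits and, simultaneously, to inverse-double-exponential accuracy. If one prefers to argue directly about an arbitrary $\PostBQSPACE(s)$ computation instead of routing through $\PrSPACE(s)$, one must additionally fold the intermediate postselection projectors --- which are not trace preserving --- into the matrix-power bookkeeping that produces both the ``accept and all postselections succeed'' and the ``all postselections succeed'' quantities, and then take their ratio. I do not expect a genuinely new idea beyond combining the Fefferman--Remscrim / Girish--Raz--Zhan technique with the superoperator formalism already used for $\PostBQSPACE(s)\subseteq\PrSPACE(s)$ and with the postselection amplifier, but the accounting of the total error budget --- the simulation error, the angle-discretization error, and the analysis of the amplifier at the $2^{-2^{O(s)}}$ scale, all to be driven well below $2^{-2^{O(s)}}$, using that $O(s)$ space and $2^{O(s)}$ time suffice to compute $2^{O(s)}$-bit approximations --- is the delicate part.
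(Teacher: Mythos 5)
Your reduction of the corollary to \autoref{th:main} at $s=\log n$, and your treatment of the two easy inclusions ($\PostBQuSPACE(s)\subseteq\PostBQSPACE(s)$ by definition, $\PostBQSPACE(s)=\PrSPACE(s)$ by lifting Le Gall--Nishimura--Yakaryilmaz), match the paper. But your route to the one substantive inclusion, $\PrSPACE(s)\subseteq\PostBQuSPACE(s)$, is different from the paper's and has two genuine gaps. First, your Step~1 requires a unitary, $O(s)$-qubit, $2^{O(s)}$-time circuit whose acceptance probability reproduces the PTM's acceptance probability to additive error far below $2^{-T}$ with $T=2^{O(s)}$, i.e.\ doubly exponentially small. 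The Fefferman--Remscrim and Girish--Raz--Zhan constructions are engineered to preserve \emph{bounded-error} acceptance; you yourself only call it ``plausible'' that they reach this precision, and nothing in the proposal establishes it. Since the signal you must then amplify is itself of size $2^{-T}$, the construction collapses without this. The paper sidesteps the issue entirely: it starts from the LGNY circuit $Q_x$, which contains only intermediate \emph{postselections} (no intermediate measurements), and replaces each postselection by a controlled increment of an $O(s)$-qubit counter. Because the counter, once nonzero, never returns to zero, the branch with counter value zero carries \emph{exactly} the postselected state $\gamma_k\ket{\Psi_k}$ --- there is no approximation error and hence no precision budget to manage.

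Second, your amplifier is a superposed-index variant of Aaronson's gadget, whereas Aaronson's proof (and its LGNY adaptation) iterates over the index sequentially and measures each prepared state. Once the index $i$ is in superposition and you postselect at the end, the posterior distribution over $i$ is weighted by the squared norms of the unnormalized branch states, which can concentrate on a few indices; the claim that the measured index lands in the correct half of the range with constant advantage does not follow from Aaronson's analysis and is not proven here. Likewise, ``a final standard round of independent repetition and majority vote'' is not available off the shelf in the unitary space-bounded model, since storing outcomes costs space. The paper's amplification instead accumulates the products $\prod_k\abs{\alpha_k}^2$ and $\prod_k\abs{\beta_k}^2$ as the all-zero-state amplitudes of two subroutines $U_+$ and $U_-$ run in orthogonal branches of a single control qubit, raises the ratio to the $r$-th power by iterating $(U_+)^r$ and $(U_-)^r$, and reads the answer off the control qubit after one final postselection, giving error $2^{-r}$ with no majority vote. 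You would need to either supply the missing precision and amplifier analyses or switch to a counter-based exact elimination of postselections as the paper does.
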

\autoref{th:main} holds even when the completeness and soundness errors are
$2^{-2^{O(s)}}$ (see \autoref{th:PostQuSPACE} for a more precise statement).
This justifies defining the bounded-error class $\PostBQuSPACE$ (note that it is non-trivial 
to reduce errors in the space-bounded unitary computation).

As an application, we characterize the power of space-bounded computation with postselection on a quantum model that is inherently unitary.
The deterministic
quantum computation with one quantum bit (DQC1)\cite{KniLaf98PRL}, often mentioned as the one-clean-qubit model, is one of well-studied quantum computation models 
with limited computational resources (e.g.,~\cite{AmbSchVaz06JACM,ShoJor08QIC,MorFujFit14PRL,FujKobMorNisTamTan16ICALP,MorFujNis17PRA,FujKobMorNisTamTan18PRL}).
This model was originally motivated by nuclear magnetic resonance (NMR) quantum information processing, where it is difficult to initialize qubits to a  pure state.
In the DQC1 model, the initial state is thus the completely mixed state except for a single qubit,  i.e., $\ket{0}\bra{0}\otimes (\Identity/2)^{\otimes m}$,  if the total number of qubits is $m+1$. This model is inherently unitary, since if intermediate measurements or postselections were allowed, the completely mixed state could be projected to the all-zero state $\ket{0^m}$ and thus the DQC1 model would become the ordinary quantum computation model, which is supplied with the all-zero state as the initial state.

Although $\DQC1$ is considered very weak, 
a polynomial-size $\DQC1$ circuit followed by postselection is surprisingly powerful:
The corresponding bounded-error class
$\PostBDQC$
 is equal to $\PostBQP\ (=\PP)$~\cite{MorFujFit14PRL,FujKobMorNisTamTan18PRL}. 
Although this class is unrealistic, it plays an essential role in giving a strong evidence of
 computational supremacy of the $\DQC1$ computation over classical computation:
If any polynomial-size $\DQC1$ circuit is classically simulatable in polynomial time,
 then the polynomial hierarchy ($\PH$) collapses~\cite{MorFujFit14PRL,FujKobMorNisTamTan18PRL}.

Let us consider the space-bounded version of $\PostBDQC$.
For a space-constructible function $s$ with $s(n)\in \Omega(\log n)$, let $\PostBDQCSPACE(s)$ be the class of languages that can be recognized with bounded-error by DQC1 computation with postselection that uses $O(s)$ qubits
and runs in $2^{O(s)}$ time,
where all postselections and measurements are made at the the end of computation.
\begin{theorem}\label{th:DQC}
For any space-constructible function $s$ with $s(n)\in \Omega(\log n)$, it holds that
$$\PostBDQCSPACE(s)=\PostBQuSPACE(s)=\PrSPACE(s).$$
In particular, $\PostBDQCL=\PostBQuL=\PL$.
\end{theorem}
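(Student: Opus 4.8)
The plan is to reduce \autoref{th:DQC} to \autoref{th:main}. Since that theorem already supplies $\PostBQuSPACE(s)=\PrSPACE(s)$, and $\PrSPACE(s)$ is the common value, it suffices to prove the two inclusions $\PostBDQCSPACE(s)\subseteq\PostBQuSPACE(s)$ and $\PostBQuSPACE(s)\subseteq\PostBDQCSPACE(s)$. The first one I would obtain by purification: given a $\DQC1$ computation with postselection whose initial state is $\ket{0}\bra{0}\otimes(\Identity/2)^{\otimes m}$, I replace each maximally mixed qubit by one half of a fresh EPR pair prepared on two $\ket{0}$-initialized ancillas (a Hadamard and a CNOT), use that half wherever the original computation used the mixed qubit, and never touch the other half again; the reduced state on the qubits actually used is exactly $\ket{0}\bra{0}\otimes(\Identity/2)^{\otimes m}$, so running the same unitary and the same final postselections and output measurement gives the identical conditional output distribution. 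This costs $1+2m=O(s)$ qubits and $O(m)$ extra gates, and it plainly preserves the time bound, the error bounds, and space-uniformity.

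The substantive direction is $\PostBQuSPACE(s)\subseteq\PostBDQCSPACE(s)$, and here the single clean qubit carries the argument. Take a $\PostBQuSPACE(s)$ computation that starts in $\ket{0^N}$ with $N=O(s)$, applies a space-uniform unitary $U$, and finally postselects a subregister onto a fixed basis state and measures an output qubit. I would build a $\DQC1$ computation with postselection on a clean qubit $c$, an $N$-qubit maximally mixed work register $R$, and a few dirty ancilla qubits, as follows. First apply the unitary $A:\ket{b}_c\ket{x}_R\mapsto\ket{b\oplus[x\neq 0^N]}_c\ket{x}_R$, which flips $c$ exactly when $R\neq 0^N$; up to single-qubit $X$ gates on $R$ this is a multiply-controlled NOT onto $c$, implementable in $O(N^2)$ elementary gates with no ancilla, or in $O(N)$ gates with one dirty ancilla. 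After $A$ the state is a mixture in which the branch $x=0^N$, occurring with probability $2^{-N}>0$, carries $c=\ket{0}$ and $R=\ket{0^N}$, while every other branch carries $c=\ket{1}$. Then apply $U$ to $R$, leaving $c$ untouched, and at the end of the computation postselect $c$ onto $\ket{0}$; this retroactively selects the branch in which $R$ started in $\ket{0^N}$, so conditioned on $c=0$ the register $R$ is exactly in the pure state $U\ket{0^N}\bra{0^N}U^{\dagger}$. Applying the original postselection and output measurement to $R$, also at the end, then reproduces the original conditional acceptance probability verbatim; the simulating computation uses $O(s)$ qubits and $2^{O(s)}$ time and is space-uniform, which gives the inclusion and hence the theorem.

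The point I expect to be the real obstacle is respecting the rule that all postselections and measurements happen at the end: the naive idea of "cleaning" the mixed qubits by postselecting them onto $\ket{0}$ before running $U$ is illegal. The construction above sidesteps this by first recording the cleanliness check $[R=0^N]$ into the clean qubit and deferring the matching postselection to the end; note that $c$ must start in a pure state for the recorded bit to survive conditioning, which is exactly why one clean qubit is needed and confirms that the power of $\PostBDQCSPACE(s)$ comes from the single clean qubit together with postselection. The remaining items are routine: checking that $A$ (hence the whole circuit) is space-uniform, that the dirty-ancilla multiply-controlled NOT fits the $O(s)$-qubit budget, and that, because no conditional probability is altered anywhere, the strong $2^{-2^{O(s)}}$ error bounds inherited via \autoref{th:PostQuSPACE} transfer to $\PostBDQCSPACE(s)$ as well.
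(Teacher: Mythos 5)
Your proposal is correct and follows essentially the same route as the paper: the paper's Lemma~\ref{lm:upperDQC} also purifies the maximally mixed qubits via EPR pairs prepared from $\ket{0}$-ancillas, and its Lemma~\ref{lm:lowerDQC} likewise records whether the dirty register is all-zero into the single clean qubit with a multiply-controlled NOT (citing~\cite{BarBenCleDiVMarShoSleSmoWei95PRA} for the dirty-ancilla implementation), applies $U_x$ to the dirty qubits, and defers the matching postselection on the clean qubit to the end. The only cosmetic difference is the sign convention on the clean qubit (the paper flips it to $\ket{1}$ on the all-zero branch and postselects $\ket{1}$, you keep it at $\ket{0}$ and postselect $\ket{0}$), which is immaterial.
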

This result relates quantum computational supremacy of space-bounded  $\DQC1$ computation 
with complexity theory as in the time-bounded case.
Namely, if any $s$-space  $\DQC1$ computation can be classically simulated with space bound $s$,
then it must hold that $\PrSPACE(s)\subseteq \PostBSPACE(s)$ by \autoref{th:DQC},
where $\PostBSPACE(s)$ is the classical counterpart of $\PostBQuSPACE(s)$. 
This relation is the space-bounded equivalent of $\PP\subseteq \PostBPP$. Note that $\PP\subseteq \PostBPP$ leads to the collapse of $\PH$~\cite{BreJozShe10PRS},  since $\PostBPP$ is in the third level of $\PH$.
However, it is open whether $\PrSPACE(s)\subseteq \PostBSPACE(s)$ implies implausible consequences.

\subsection{Technical Outline}
Since space-bounded quantum computation with postselection can trivially simulate the unitary counterpart by the definition,
our main technical contribution is to show that the unitary counterpart can simulate unbounded-error probabilistic computation. 
Our starting point is the simulation of unbounded-error probabilistic computations
by space-bounded quantum computations with postselection \cite{LeGNisYak21TQC}.
The simulation~\cite{LeGNisYak21TQC} consists of two components: the first one, $Q_x$, simulates the unbounded-error probabilistic computation
with acceptance probability $p_a$ on input $x$ to output the state (up to a normalizing factor)
$\ket{\Psi_\delta}=\left(1/2+p_a\right)\left|0\right\rangle+\delta\left(1/2-p_a\right)\left|1\right\rangle$ for a given positive parameter $\delta$; the second one
decides whether $p_a>1/2$ or $p_a<1/2$ with bounded error by repeatedly running $Q_x$ to prepare the states $\ket{\Psi_\delta}$ for various values of $\delta$ 
and measuring them in the basis $\{\ket{+},\ket{-}\}$ (based on a modification of the idea~\cite{Aar05RSPA}).
We eliminate the intermediate postselection in $Q_x$ space-efficiently by,
every time postselection is made in $Q_x$,
incrementing a counter coherently if the postselection qubit is in the non-postselecting state. It is not difficult to see that this works since $Q_x$ includes only intermediate postselections (and does not include intermediate measurements). This gives a unitary version $V_x$ of $Q_x$. 
However, the second component includes both intermediate postselections and measurements, and needs run $Q_x$ sequentially (because running $Q_x$ in parallel is not space-efficient). 
We then construct two subroutines $U_+$ and $U_-$, which accumulate the amplitudes of $\ket{+}$ and $\ket{-}$, respectively, in the states obtained by repeatedly running $V_x$ for various values of $\delta$, as the amplitude of the all-zero state.
Finally, we run $U_+$ and $U_-$ in orthogonal spaces, respectively, followed by postselecting the all-zero state on the qubits that $U_+$ and $U_-$ act on.
The resulting state is significantly supported by one of the spaces, which determines whether $p_a>1/2$ or $p_a<1/2$ with high probability.

\subsection{Organization}
Sec.~2 introduces definitions, basic claims and known theorems.
Sec.~3 provides the formal statement of our main result and proves it by using a lemma, which is proved in Sec.~4.
Sec.~5 provides an application of the result.

\section{Prelimiaries}
Let $\Natural$, $\Integer$, and $\Real$ be the sets of natural numbers, integers, and real numbers, respectively. For $m\in \Natural$, let $[m]$ be the set of $\{1,\dots,m\}$. Assume that $\Sigma$ is the set $\{0,1\}$.  A \emph{promise problem} $L=(L_Y,L_N)$ is a pair of disjoint subsets of $\Sigma^\ast$. In the special case of promise problems such that $L_Y\cup L_N=\Sigma^\ast$, we say that $L_Y$ is a \emph{language}. 
\paragraph*{Classical Space-Bounded Computation}
We say that a function $s:\mathbb{N}\rightarrow\mathbb{N}$ is \emph{space-constructible} if there exists a deterministic Turing machine (DTM) that compute $s\left(\left|x\right|\right)$ in space $O\left(s\left(\left|x\right|\right)\right)$ on input $x$.
Suppose that $s$ is a space constructible function.
Then, we say that a function $f:\mathbb{N}\rightarrow\mathbb{R}$ is \emph{$s$-space computable} if there exists a DTM that computes $f\left(\left|x\right|\right)$ in space $O\left(s\left(\left|x\right|\right)\right)$ on input $x$.
For $s\in\Omega\left(\log{n}\right)$, $\PrSPACE(s)$ is the class of promise problems $L$ such that there exists a probabilistic Turing machine (PTM) $M$ running with space $O\left(s\right)$ that satisfies the following: 
For every input $x\in L_Y,$ the probability that $M$ accepts $x$ is greater than 1/2, and for every input $x\in L_N$, the probability that $M$ accepts $x$ is at most 1/2. We can replace the condition in the case of $x\in L_N$ with 
``the probability that $M$ accepts $x$ is less than 1/2'' without changing the class $\PrSPACE\left(s\right)$. In this paper, we adopt the latter definition. It is known that the class $\PrSPACE(s)$ does not change even if we impose the time bound $2^{O\left(s\right)}$ on the corresponding PTM with space bound $s$~\cite{Jun85ICALP} (see also~\cite{Sak96CCC}).
We define $\PL\equiv\PrSPACE\left(\log\right)$.

\paragraph*{Quantum Circuits}
We below introduce just notations and terminologies relevant to this paper.
For the basics of quantum computing, see standard textbooks (e.g., \cite{NieChu00Book,KitSheVya02Book}). Let $\ket{+}=(\ket{0}+\ket{1})/\sqrt{2}$ and $\ket{-}=(\ket{0}-\ket{1})/\sqrt{2}$.

A \emph{quantum gate} implements a unitary operator. We say that a quantum gate is \emph{elementary} if it acts on a constant number of qubits. We may use a quantum gate and its unitary operator that the gate implements interchangeably. Examples of elementary gates are 
$\Hadamard\equiv \ket{+}\bra{0}+\ket{-}\bra{1}$, 
$\PauliX\equiv\ket{0}\bra{1}+\ket{1}\bra{0}$,
$\magicT\equiv \ket{0}\bra{0}+e^{i\pi/4}\ket{1}\bra{1}$, and
$\CNOT\equiv \ket{0}\bra{0}\otimes \Identity +\ket{1}\bra{1}\otimes \PauliX$.
For unitary gate $g$,
$\land_k\oset{g}$ denotes the unitary gate acting on $k+1$ qubits such that it applies $g$ to the last qubit if the contents of the first $k$ qubits are all 1, and it applies the identity otherwise. We may simply say that $\land_k(g)$ is a \emph{$k$-qubit-controlled} $g$. In the case of $k=1$, we may use $\land(g)$ to denote $\land_1(g).$ For instance, $ \land_2(\PauliX)$ is the Toffoli gate and $\land(\PauliX)$ is the $\CNOT$ gate. 
We also define $\lor_k\oset{g}$ as the unitary gate acting on $k+1$ qubits such that it applies $g$ to the last qubit if the contents of the first $k$ qubits are not all-zero, and it applies the identity otherwise. 
Gate set ${G}$ is defined as $G_1\cup G_2\cup G_3$ for $G_1\equiv \set{\Hadamard, \magicT, \CNOT}$,
the set $G_2$ of a constant number of additional elementary gates used for block encoding in~\cite{LeGNisYak21TQC}, and 
the set $G_3$ of $\land (g)$ for all  $g\in G_1\cup G_2$. This choice of gates is not essential for our results when completeness and soundness errors are allowed to be $2^{-O(s)}$ for space bound $s$, because of space-efficient version of Solovay-Kitaev theorem (see~\cite[Theorem 4.3]{MelWat12ToC}), which says that it is possible to $\epsilon$-approximate every unitary gate with a sequence of gates in any fixed gate set that is finite and universal in $O(\mathrm{polylog}\left(1/\epsilon\right))$ deterministic time and $O(\log{\left(1/\epsilon\right)})$ deterministic space.

For simple descriptions, we will also use $k$-qubit-controlled gates $\land_k\left(g\right)$ for $g\in G_1 \cup G_2$ and $k\ge 2$ in the following section, since $\land_k\left(g\right)$ can be implemented with gate set $G$ together with $O(1)$ reusable ancilla qubits with negligible gate overhead:
\begin{claim}\label{cl:Cont(g)}
For every gate $g$, $\land_k(g)$ can be implemented with 
$\land(g)$, and $O(k)$  $\CNOT$ and $\magicT$ gates
together with $O(1)$ ancilla qubits initialized to $\ket{0}$. 
Similarly, $\lor_k(g)$ can be implemented with 
$\land(g)$, and $O(k)$ $\CNOT$, $\magicT$  and $\PauliX$ gates
together with $O(1)$ ancilla qubits initialized to $\ket{0}$.
Moreover, the states of the ancilla qubits return to $\ket{0}$ after applying $\land_k(g)$ ($\lor_k(g)$).
\end{claim}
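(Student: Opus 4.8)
The plan is to reduce $\land_k(g)$ to a single application of $\land(g)$ together with a multiply-controlled $\PauliX$ gate, and then to realize that multiply-controlled NOT using $O(k)$ Toffoli gates and only $O(1)$ ancilla qubits. Concretely, introduce one ancilla $a$ in state $\ket{0}$, coherently compute the conjunction $c_1\wedge\cdots\wedge c_k$ of the $k$ control qubits into $a$ — that is, apply $\land_k(\PauliX)$ with $a$ as target — then apply the single gate $\land(g)$ with $a$ as control and the original target qubit(s) as target, and finally run the conjunction computation in reverse to restore $a$ to $\ket{0}$. This uses $\land(g)$ exactly once, returns all ancillas to $\ket{0}$, and leaves us only to implement $\land_k(\PauliX)$ within the stated budget.

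For $\land_k(\PauliX)$, the naive chain of Toffoli gates through fresh ancillas — $\land_2(\PauliX)$ on $(c_1,c_2,b_1)$, then on $(c_3,b_1,b_2)$, and so on — uses $O(k)$ gates but $\Theta(k)$ ancilla qubits, which is too many. Instead I would invoke the standard control-splitting construction of Barenco et al.: splitting the $k$ controls into two roughly equal halves and using a fixed number of clean ancilla qubits as ``dirty'' borrowed work space for the two half-sized sub-instances yields a circuit with $O(k)$ Toffoli and $\CNOT$ gates and only $O(1)$ ancilla qubits, all of which are returned to their initial state. Each Toffoli gate is then replaced by its textbook decomposition into a constant number of $\CNOT$ and $\magicT$ gates (plus a constant number of $\Hadamard$ gates, which also lie in $G_1$), so the total is $O(k)$ elementary gates. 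Composing this with the reduction of the previous paragraph proves the statement for $\land_k(g)$.

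The case of $\lor_k(g)$ follows from De Morgan's law. Apply $\PauliX$ to each of the $k$ control qubits; run the above implementation of $\land_k(\PauliX)$ into a clean ancilla $a$, so that $a$ now holds $1$ exactly when all original controls were $0$; flip $a$ with one more $\PauliX$, so that $a$ holds $1$ exactly when at least one original control is $1$, i.e.\ when the controls are not all zero; apply $\land(g)$ with control $a$; and then uncompute $a$ and un-flip the $k$ controls. This costs an additional $2k+O(1)$ $\PauliX$ gates and again returns every ancilla to $\ket{0}$.

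I expect the only genuine difficulty to be the $O(1)$-ancilla, $O(k)$-gate realization of the multiply-controlled NOT: the obvious implementations trade ancilla count against gate count (a balanced tree of Toffolis uses $\Theta(k)$ ancillas, while a single-ancilla implementation without the splitting idea blows up the gate count), so keeping both quantities small simultaneously relies on the borrowed-dirty-work-qubit trick. The rest — the reduction to $g=\PauliX$, the Toffoli decomposition, and the De Morgan step for $\lor_k(g)$ — is routine bookkeeping, as is checking that all ancillas are uncomputed back to $\ket{0}$.
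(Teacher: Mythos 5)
Your proposal is correct and follows essentially the same route as the paper: reduce $\land_k(g)$ to a single $\land(g)$ sandwiched by two $\land_k(\PauliX)$ computations into a clean ancilla, realize $\land_k(\PauliX)$ with $O(k)$ gates and $O(1)$ (borrowed) ancillas via the Barenco et al.\ construction, and handle $\lor_k(g)$ by de Morgan's law. You correctly identify the only nontrivial ingredient, namely the simultaneous $O(k)$-gate, $O(1)$-ancilla multiply-controlled NOT, which is exactly what the paper relies on.
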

\begin{proof}
One can replace $ \land_k(g)$ with a single $\land\left(g\right)$ sandwiched by two $ \land_k(\PauliX)$ with a single ancilla qubit initialized to $\left|0\right\rangle$ as their common target qubit, whose state returns to  $\left|0\right\rangle$ after $\land_k\left(g\right) $. One can further replace $\land_k(\PauliX)$ with $O(k)$ $\CNOT$ and T gates  with the help of two uninitialized ancilla qubits, whose states return to the initial state~\cite{BarBenCleDiVMarShoSleSmoWei95PRA}. 
The claim for $\lor_k(g)$ follows from de Morgan's law.
\end{proof}
Since the overhead of $O(k)$ gates can be ignored in our setting of $O(s)$ space and $2^{O\left(s\right)}$ time computation, 
we can effectively use $\land_k(g)$ freely.

Next, we define a special gate that will be used many time in this paper. 
Let $\INCREMENT_{2^n}$  be the unitary gate acting on $n$ qubits that transforms $\INCREMENT_{2^n}\colon \ket{j}\mapsto \ket{(j+1)\mod 2^n}$ for all $j\in \{0,\dots,2^n-1\}$. 
Intuitively, $\INCREMENT_{2^n}$ increments a counter over $\Integer_{2^n}$.
\begin{claim}\label{cl:ContINC}
$\land_k(\INCREMENT_{2^n})$ can be implemented with  $O(k+n^2)$ $\CNOT$ and $\magicT$  gates with the help of $O(1)$ ancilla qubits initialized to $\ket{0}$.
Similarly, $\lor_k(\INCREMENT_{2^n})$ can be implemented with  $O(k+n^2)$ $\CNOT$, $\magicT$, and $\PauliX$ gates with the help of $O(1)$ ancilla qubits initialized to $\ket{0}$.
The states of the ancilla qubits return to $\ket{0}$ after applying $\land_k(\INCREMENT_{2^n})\ (\lor_k(\INCREMENT_{2^n}))$ .

\end{claim}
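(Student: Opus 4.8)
The plan is to build $\INCREMENT_{2^n}$ as a short ripple-carry circuit of multiply-controlled $\PauliX$ gates, and then add the $k$ controls by a fan-in/fan-out trick so that the dependence on $k$ stays additive rather than multiplicative. Label the $n$ qubits $q_0,\dots,q_{n-1}$ with $q_0$ the least significant. Since adding $1$ simply propagates a carry, I would use the identity
\[
\INCREMENT_{2^n}=\PauliX_{q_0}\cdot\land_1(\PauliX)\cdot\land_2(\PauliX)\cdots\land_{n-1}(\PauliX),
\]
where, reading right to left (i.e.\ in order of application), the factor $\land_j(\PauliX)$ has target $q_j$ and controls $q_0,\dots,q_{j-1}$, and the final $\PauliX$ flips $q_0$; applying the gates from the most significant bit downward guarantees that each gate still sees the unmodified lower bits, which is exactly the carry condition $q_0=\dots=q_{j-1}=1$. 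Each factor acts on at most $n$ qubits, so by \autoref{cl:Cont(g)} (which uses the multiply-controlled-$\PauliX$ decomposition of~\cite{BarBenCleDiVMarShoSleSmoWei95PRA}) the factor $\land_j(\PauliX)$ costs $O(j)$ $\CNOT$ and $\magicT$ gates and $O(1)$ ancilla qubits that are returned to $\ket{0}$; summing over $j=1,\dots,n-1$ yields $O(n^2)$ gates for $\INCREMENT_{2^n}$ itself.

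For $\land_k(\INCREMENT_{2^n})$ I would not control each of the $n$ factors by all $k$ control qubits, since that would cost $\sum_j O(j+k)=O(n^2+nk)$. Instead, first compute the conjunction of the $k$ control qubits into a fresh ancilla $a$ initialized to $\ket{0}$ using a single $\land_k(\PauliX)$, at cost $O(k)$ $\CNOT$ and $\magicT$ gates and $O(1)$ ancillas (\autoref{cl:Cont(g)}); then run the ripple-carry circuit above with every $\land_j(\PauliX)$ replaced by $\land_{j+1}(\PauliX)$ using $a$ as the extra control, at total cost $\sum_j O(j+1)=O(n^2)$; and finally uncompute $a$ with another $\land_k(\PauliX)$, at cost $O(k)$. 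All ancillas---including $a$ and the $O(1)$ reusable ones invoked by the individual controlled-$\PauliX$ gates---are returned to $\ket{0}$, so only $O(1)$ of them are needed, and the total is $O(k+n^2)$ $\CNOT$ and $\magicT$ gates.

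For $\lor_k(\INCREMENT_{2^n})$ I would use de Morgan's law, as in the proof of \autoref{cl:Cont(g)}: conjugating the $k$ control qubits by $\PauliX$ turns a control-on-all-ones into a control-on-all-zeros, and for any gate $g$,
\[
\lor_k(g)=g\cdot\bigl(\PauliX^{\otimes k}\otimes\Identity\bigr)\,\land_k\!\bigl(g^{-1}\bigr)\,\bigl(\PauliX^{\otimes k}\otimes\Identity\bigr),
\]
since the right-hand side applies $g$ exactly when the $k$ control qubits are not all $0$. Taking $g=\INCREMENT_{2^n}$, its inverse is the reversed ripple-carry circuit and so has the same cost, hence the whole construction uses $O(k+n^2)$ $\CNOT$, $\magicT$, and $\PauliX$ gates with $O(1)$ ancillas returned to $\ket{0}$. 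The only genuinely non-obvious point here is the compression step highlighted above---folding the $k$ controls into a single ancilla before the ripple-carry stage, which replaces the $O(nk)$ term by $O(k)$; everything else is routine bookkeeping with \autoref{cl:Cont(g)}.
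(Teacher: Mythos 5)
Your proof is correct and follows essentially the same route as the paper: the paper likewise reduces $\land_k(\INCREMENT_{2^n})$ to a single-controlled increment via \autoref{cl:Cont(g)} (your fold-the-$k$-controls-into-one-ancilla step), implements $\INCREMENT_{2^n}$ by the same ripple-carry cascade of $\land_j(\PauliX)$ gates (cited from~\cite{FujKobMorNisTamTan16arxive} rather than derived), decomposes each $\land_j(\PauliX)$ via~\cite{BarBenCleDiVMarShoSleSmoWei95PRA}, and handles $\lor_k$ by de Morgan's law. The only difference is that you spell out details the paper delegates to citations.
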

\begin{proof}
By \autoref{cl:Cont(g)}, $\land_k(\INCREMENT_{2^n})$ can be implemented with $\land(\INCREMENT_{2^n})$ and $O(k)$ $\CNOT$ and $\magicT$ gates
together with $O(1)$ ancilla qubits initialized to $\ket{0}$. 
The gate $\INCREMENT_{2^n}$ can be implemented with  a single $\land_{j-1} (\PauliX)$ gate for each $j\in [n-1]$~\cite[Sec.~3.3]{FujKobMorNisTamTan16arxive} .
Thus, $\land(\INCREMENT_{2^n})$ can be implemented with  a single $\land_{j} (\PauliX)$ gate for each $j\in [n-1]$.
Since each $\land_{j} (\PauliX)$ can be implemented with $O(j)$  $\CNOT$ and $\magicT$ gates with the help of two uninitialized ancilla qubits~\cite{BarBenCleDiVMarShoSleSmoWei95PRA},  
$\land(\INCREMENT_{2^n})$ can be implemented with $O(n^2)$  $\CNOT$ and $\magicT$ gates with those ancilla qubits.
Therefore, $\land_k(\INCREMENT_{2^n})$ can be implemented with $O(k+n^2)$  $\CNOT$ and $\magicT$ gates together with $O(1)$ ancilla qubits.
The claim for $\lor_k(\INCREMENT_{2^n})$ follows from de Morgan's law.
 \end{proof}
Since the overhead of $O(k+n^2)$ gates with $n,k\in O(s)$ can be ignored in our setting of $O(s)$ space and $2^{O\left(s\right)}$ time computation, 
we can effectively use $\land_k(\INCREMENT_{2^n})$ freely. 

A \emph{quantum circuit} consists of quantum gates in a fixed universal set of a constant number of unitary gates, and (intermediate) measurements. 
A \emph{quantum circuit with postselection} is a quantum circuit with the ability of postselection even at intermediate points in the circuit.
Here, the \emph{postselection}~\cite{Aar05RSPA} is a fictitious function that projects a quantum state on a single qubit to the state $ \left|1\right\rangle$ with certainty, as far as there exists a non-zero overlap between the state and 
$\left|1\right\rangle$. For instance, if we make postselection on the first qubit of quantum state $\alpha\left|0\right\rangle\left|\psi_0\right\rangle+\beta\left|1\right\rangle\left|\psi_1\right\rangle$ with $\beta\neq0$, resulting state is $\left|1\right\rangle\left|\psi_1\right\rangle$.  Since the qubit on which postselection has been made is in the state $\left|1\right\rangle$ by the definition, we can reuse them as initialized qubits for subsequent computation. This can greatly save the space (i.e., the number of ancilla qubits) as in the case of intermediate measurements. We say that $\ket{1}$ is the postselecting state, and the state orthogonal to the postselecting state,  $\ket{0}$,  is the non-postselecting state.  To simplify descriptions, we may say ``postselect $\ket{\phi}$'' to mean that we first apply a single-qubit unitary $U$\footnote{Of course, $U$ must be efficiently implementable with gates in the gate set that we assume.} such that $\ket{1}=U\ket{\phi}$ and then postselect $\ket{1}$. We may also say ``postselect $ \left|\phi_1\right\rangle\otimes\cdots\otimes \left|\phi_m\right\rangle$,'' where $ \left|\phi_i\right\rangle$ is a single-qubit pure state for every $i$, to mean postselecting $\ket{\phi_i}$ on the $i$th qubit for each $i=1,\ldots,\ m$. For instance, we may say ``postselecting the all-zero state'' (i.e., postselecting $\left|0^m\right\rangle$).
A \emph{unitary quantum circuit} consists of only (unitary) quantum gates and does not include any measurement or postselection. To perform computational tasks, a unitary quantum circuit will be followed by measurements (and postselections).

We say that, for a promise problem $L$, a family of quantum circuits $ \{Q_x:x\in L\}$ with postselections (a family of unitary quantum circuits $\{U_x:x\in L\}$) is \emph{$s$-space uniform} if there exists a DTM that, on input $x\in L$,  outputs a description of $Q_x$ ($U_x$, respectively) with the use of space $O(s(\left|x\right|))$ (and hence in $2^{O\left(s\left(\left|x\right|\right)\right)}$ time).

Let $s\colon \Natural\to \Natural$ be a space-constructible function with $s(n)=\Omega(\log n)$.
Assume that functions $c,d\colon \Natural\to [0,1]$ are $s$-space computable, and that $c(n)>d(n)$ for sufficiently large $n\in \Natural$.
\begin{definition}[$\PostQSPACE$]\label{def:PostQSPACE}
Let $\PostQSPACE\left(s\right)\left[c,d\right]$ be the class of promise problems $L=(L_Y,L_N)$ for which there exists an $s$-space uniform family of quantum circuits with postselection, $\{Q_x:x\in L\}$,  that act on $m=O(s\left(\left|x\right|\right))$ qubits and consist of $2^{O\left(s\left(\left|x\right|\right)\right)}$ elementary gates such that,
when applying $Q_x$ to $\ket{0}^{\otimes m}$,
(1) the probability $p_{post}$ of measuring $\left|1\right\rangle$ on every postselection qubit is strictly positive;
(2) for $x\in L_Y$, conditioned on all the postselection qubits being $\left|1\right\rangle$, the probability that $Q_x$ accepts is at least $c\left(\left|x\right|\right)$;
(3) for $x\in L_N$, conditioned on all the postselection qubits being $\left|1\right\rangle$, the probability that $Q_x$ accepts is at most $d\left(\left|x\right|\right)$.
\end{definition}

\begin{definition}[$\PostQuSPACE$]\label{def:PostQuSPACE}
Let $\PostQuSPACE(s)[c.d]$ be the class of promise problems $L=(L_Y,L_N)$ for which there exists an $s$-space uniform family of unitary quantum circuits, $\set{U_x\colon x\in L}$, that act on $m\in O(s(\abs{x}))$
qubits and consist of $2^{O(s(\abs{x}))}$ elementary gates, followed by postselection on the first qubit and measurement on the output qubit  (say, the second qubit) in the computational basis, such that, when applying $U_x$ to $\ket{0}^{\otimes m}$,
(1) the probability $p_{post}$ of measuring $\ket{1}$  on the first qubits is strictly positive;
(2) for $x\in L_Y$, conditioned on the first qubit being $\ket{1}$, the probability that $U_x$ accepts  is at least $c(\abs{x})$;
(3) for $x\in L_N$, conditioned on the first qubit being $\ket{1}$, the probability that $U_x$ accepts  is at most $d(\abs{x})$.
\end{definition}
\autoref{def:PostQuSPACE} assumes that postselection is made only on a single qubit. This is general enough since, if there are $k$ postselection qubits, then we can aggregate them into a single postselection qubit by using  $\land_k(\PauliX)$ and $O(1)$ ancilla qubits.  Obviously, this aggregation does not change $p_{post}$ and the acceptance probability.

Define $\PostBQSPACE(s)\equiv \PostQSPACE(s)[2/3,1/3]$ and $\PostBQL\equiv\PostBQSPACE(\log)$. Similarly, define $\PostBQuSPACE(s)\equiv\PostQuSPACE(s)[2/3,1/3]$ and 
$\PostBQuL\equiv\PostBQuSPACE (\log)$. 
Le Gall, Nishimura and Yakaryilmaz~\cite{LeGNisYak21TQC} proved  the following.
\begin{theorem}[\cite{LeGNisYak21TQC}]
$\PostBQL=\PL.$
\end{theorem}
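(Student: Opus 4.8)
The plan is to establish the two inclusions $\PL\subseteq\PostBQL$ and $\PostBQL\subseteq\PL$ separately, the former being the substantive one. By~\cite{Jun85ICALP} a $\PL$ computation may be taken to run in $\mathrm{poly}(n)$ time, so on input $x$ of length $n$ its configuration space has polynomial size $N=2^{O(\log n)}$ and is indexable by $O(\log n)$ bits; the one-step transition matrix $P$ is then a $\mathrm{poly}(n)\times\mathrm{poly}(n)$ column-stochastic matrix with entries in $\{0,\tfrac12,1\}$, and the acceptance probability $p_a$ on $x$ is a dyadic rational with denominator $2^{\mathrm{poly}(n)}$, so that $x\in L_Y\iff p_a>\tfrac12$, $x\in L_N\iff p_a<\tfrac12$, and $|p_a-\tfrac12|\ge 2^{-\mathrm{poly}(n)}$ in both cases.

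For $\PostBQL\subseteq\PL$, the more direct of the two, I would pass to density operators. A $\PostBQL$ circuit on $O(\log n)$ qubits with $\mathrm{poly}(n)$ gates, intermediate measurements, and intermediate postselections induces a length-$\mathrm{poly}(n)$ sequence of linear maps on the $N^2$-dimensional space of density matrices: unitary conjugations, decoherence channels for the measurements, and unnormalized projectors for the postselections, each a $\mathrm{poly}(n)\times\mathrm{poly}(n)$ matrix whose entries are fixed algebraic numbers built from the entries of $\Hadamard,\magicT,\CNOT$ and the $G_2$ gates and hence exactly representable in $\mathrm{poly}(n)$ bits. The unnormalized final operator is the iterated product of these $\mathrm{poly}(n)$ matrices applied to $\mathrm{vec}(\ket{0^m}\!\bra{0^m})$; its trace is the postselection-success probability $p_{post}$ and its overlap with $\mathrm{vec}(\Pi_{\mathrm{acc}})$ is the unconditioned accepting probability $p_{\mathrm{acc}}$. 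Up to a common explicit power-of-two scaling both are $\mathrm{GapL}$-computable (signed sums over exponentially many ``histories'' with $\mathrm{poly}(n)$-bit weights), so deciding whether $p_{\mathrm{acc}}/p_{post}\ge 2/3$ or $\le 1/3$ -- equivalently, comparing $3\,p_{\mathrm{acc}}$ with $2\,p_{post}$ -- reduces to comparing two $\mathrm{GapL}$ functions and thus lies in $\PL$ by its closure properties together with the standard implementation of iterated matrix powering in $\PrSPACE(\log)$~\cite{BorCooPip83InfoComp,Jun85ICALP}.

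For $\PL\subseteq\PostBQL$ I would carry out the two-component construction sketched in the Technical Outline. The first component $Q_x$ keeps an $O(\log n)$-qubit configuration register and, using a block encoding of (a suitable rescaling of) $P$ built from the gates in $G_2$, applies $T=\mathrm{poly}(n)$ block-encoding steps, postselecting the block-encoding ancilla on $\ket{0}$ after each step and reusing that ancilla, thereby preparing -- up to a computable normalization and with postselection-success probability strictly positive -- a state proportional to $\sum_c(P^{T})_{c,\mathrm{start}}\ket{c}$, whose accepting-configuration amplitude is $p_a$; combining this with a fixed $\tfrac12$-amplitude on a fresh qubit and a rotation of tunable magnitude $\delta$ controlled by the accepting configuration yields the single-qubit state $\ket{\Psi_\delta}\propto\left(\tfrac12+p_a\right)\ket{0}+\delta\left(\tfrac12-p_a\right)\ket{1}$. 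Crucially $\delta$ enters only through $\log_2\delta$, so $\delta$ may range over the polynomially many values $2^0,2^1,\dots,2^{p(n)}$. The second component, a modification of Aaronson's idea~\cite{Aar05RSPA}, runs $Q_x$ for each such $\delta$ a polynomial number $R$ of times, each time measuring the output qubit in the basis $\{\ket{+},\ket{-}\}$ and, when the outcome is $\ket{-}$, applying $\INCREMENT_{2^n}$ (cf.~\autoref{cl:ContINC}) to an $O(\log n)$-bit counter, reusing the working qubits across runs so that the total space stays $O(\log n)$. From $\langle-|\Psi_\delta\rangle\propto\left(\tfrac12+p_a\right)-\delta\left(\tfrac12-p_a\right)$ one computes that if $x\in L_Y$ then $\Pr[\ket{-}]\ge\tfrac12$ for every $\delta>0$ and $\Pr[\ket{-}]\ge\tfrac12+\Omega(1)$ whenever $\delta\,|p_a-\tfrac12|\in[\tfrac12,1]$, and symmetrically (with $\ket{+}$ in place of $\ket{-}$) if $x\in L_N$; since $|p_a-\tfrac12|\in[2^{-p(n)},\tfrac12]$, some integer $k\le p(n)$ satisfies $2^k|p_a-\tfrac12|\in[\tfrac12,1]$, so the counter acquires a constant-rate surplus (resp.\ deficit) of $\ket{-}$-outcomes over its $\tfrac12$-baseline from the $R$ runs at that $k$. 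Taking $R$ a large enough polynomial makes this bias exceed the statistical fluctuations (Chernoff), so a final threshold test on the counter, whose output qubit is then measured, decides $L$ with bounded error; the postselections used inside the repeated runs of $Q_x$ supply the postselection qubits required by \autoref{def:PostQSPACE}, so the whole computation stays within $\PostBQL$.

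The main obstacle is Step A of the hard direction: encoding the probabilistic acceptance probability $p_a$ into a quantum amplitude \emph{space-efficiently}. A naive coherent simulation of the PTM would have to store its $\mathrm{poly}(n)$-bit random string and would thus blow past the logarithmic space bound; avoiding this is precisely what forces the block-encoded matrix powering with one postselection per step (hence both the inclusion of the block-encoding gate set $G_2$ and the use of intermediate postselections), and it requires a careful accounting to verify that the accumulated normalization factors stay harmless and that the final statistical bias extracted from the $\delta$-sweep is bounded below by a constant.
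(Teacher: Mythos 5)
This theorem is imported from \cite{LeGNisYak21TQC} and not proved in the present paper, but your reconstruction follows essentially the same route as that reference as the paper itself summarizes it: the hard direction via block-encoded powering of the PTM transition matrix with one postselection per step to prepare $\ket{\Psi_\delta}\propto(1/2+p_a)\ket{0}+\delta(1/2-p_a)\ket{1}$ (cf.\ \autoref{lm:SimbyPost}) followed by the Aaronson-style sweep over $\delta=2^0,\dots,2^{p(n)}$ with $\{\ket{+},\ket{-}\}$ measurements, and the converse by reducing the circuit's unnormalized acceptance and postselection probabilities to comparisons of $\mathrm{GapL}$-type iterated matrix products. Both directions are sound; only minor technicalities are glossed over (the gate entries live in $\mathbb{Z}[i,1/\sqrt{2}]$ up to a common power-of-two scaling rather than being arbitrary ``poly-bit algebraic numbers,'' and the relative weight $\delta>1$ must be realized by attenuating the $\ket{0}$ branch and postselecting rather than by a literal amplitude of magnitude $\delta$).
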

Moreover, it is straightforward to extend the result to the general space bound $s\in \Omega(\log n)$~\cite{Nis21Per}:
\begin{theorem}[\cite{LeGNisYak21TQC}]\label{th:PostQSPACE}
For any space-constructible function $s\colon \Natural \to \Natural$ with $s(n)\in \Omega(\log n)$, it holds that 
$\PrSPACE(s)=\PostQSPACE(s)[1-2^{-2^{O(s)}},2^{-2^{O(s)}} ]=\PostBQSPACE(s).$
\end{theorem}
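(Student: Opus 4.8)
The plan is to derive both claimed equalities from the cycle of inclusions
\[
\PrSPACE(s)\ \subseteq\ \PostQSPACE(s)[1-2^{-2^{O(s)}},2^{-2^{O(s)}}]\ \subseteq\ \PostBQSPACE(s)\ \subseteq\ \PrSPACE(s),
\]
which forces the three classes to coincide. The middle inclusion is immediate: for all sufficiently large $n$ we have $1-2^{-2^{O(s)}}>2/3$ and $2^{-2^{O(s)}}<1/3$, so any $s$-space uniform family of postselection circuits witnessing membership with the doubly-exponentially small error parameters also witnesses it with parameters $[2/3,1/3]$ (the condition $p_{\mathrm{post}}>0$ is unchanged, and the finitely many short inputs are absorbed into the machine). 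Hence only the first and last inclusions require work.

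For $\PostBQSPACE(s)\subseteq\PrSPACE(s)$ --- the space-bounded analogue of $\PostBQP\subseteq\PP$ --- I would adapt Watrous's simulation of space-bounded quantum computation by probabilistic computation~\cite{Wat01JCSS,Wat03CC}. A quantum circuit with intermediate measurements and postselections on $m=O(s)$ qubits and $2^{O(s)}$ elementary gates evolves an \emph{unnormalized} density operator on a space of dimension $2^{2m}=2^{O(s)}$; each gate ($\rho\mapsto U\rho U^\dagger$), each destructive intermediate measurement ($\rho\mapsto\sum_k\Pi_k\rho\Pi_k$), and each postselection ($\rho\mapsto P\rho P$ with $P$ the projector onto the postselecting state $\ket{1}$, kept unnormalized) acts linearly, so the final unnormalized operator is a product of $2^{O(s)}$ matrices of dimension $2^{O(s)}$ whose entries are $s$-space computable and of bounded magnitude. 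Consequently the probability $p_{\mathrm{post}}$ that all postselections succeed and the probability $q$ of accepting-\emph{and}-having-all-postselections-succeed are both values of $s$-space \emph{GapSPACE}-type functions; using the standard characterization of $\PrSPACE(s)$ as the class decidable by sign-testing such functions, a probabilistic machine decides whether $q-\tfrac12 p_{\mathrm{post}}>0$. On YES instances this difference is positive, since $\Pr[\text{accept}\mid\text{postselect}]=q/p_{\mathrm{post}}\ge 2/3$ and $p_{\mathrm{post}}>0$; on NO instances it is negative. All bookkeeping fits in $O(s)$ space and $2^{O(s)}$ time.

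For $\PrSPACE(s)\subseteq\PostQSPACE(s)[1-2^{-2^{O(s)}},2^{-2^{O(s)}}]$ I would invoke the simulation of Le Gall, Nishimura and Yakaryilmaz~\cite{LeGNisYak21TQC}. Given a PTM $M$ running in space $O(s)$ and time $T=2^{O(s)}$ with acceptance probability $p_a$ on $x$, first pad $M$ so that every computation path flips exactly $T$ coins and has a unique absorbing accepting and a unique absorbing rejecting configuration (an extra step-counter costs only $O(s)$ qubits); then $2^Tp_a\in\Integer$, and $x\in L_Y$ (resp.\ $L_N$) iff the integer $g:=2^Tp_a-2^{T-1}=-2^T(1/2-p_a)$ --- which is nonzero on all promise inputs --- is positive (resp.\ negative). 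The circuit $Q_x$ of~\cite{LeGNisYak21TQC} uses block encoding and intermediate postselection to prepare, on $O(s)$ qubits and for a positive parameter $\delta$, the state $\ket{\Psi_\delta}\propto(1/2+p_a)\ket{0}+\delta(1/2-p_a)\ket{1}$; since $1/2+p_a$ is bounded away from $0$ while $\mathrm{sign}(1/2-p_a)$ carries the answer, running $Q_x$ for a geometric sequence of values $\delta=1,1/2,\dots,2^{-2^{O(s)}}$ (sequentially, reusing space), measuring each resulting state in the $\{\ket{+},\ket{-}\}$ basis, and post-processing the outcomes as in Aaronson's argument for $\PP\subseteq\PostBQP$ yields a decision whose completeness and soundness errors are $2^{-2^{O(s)}}$. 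Everything in~\cite{LeGNisYak21TQC} is stated for $s=\log n$; the extension to an arbitrary space-constructible $s\in\Omega(\log n)$ (the content of~\cite{Nis21Per}) is routine --- the configuration space has size $2^{O(s)}$ rather than $\mathrm{poly}(n)$, ``polynomial time'' becomes $2^{O(s)}$ time, and all numerical parameters, including $2^{-2^{O(s)}}$, are $s$-space computable.

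The main obstacle is the first inclusion, i.e.\ the space-efficient quantum simulation of an unbounded-error probabilistic computation. The naive routes --- expanding the depth-$T$ computation tree into its $2^T$ leaves, or amplifying the gap $|p_a-1/2|$ (which can be as tiny as $2^{-2^{O(s)}}$) by a majority vote over exponentially many independent runs --- both violate the $O(s)$-qubit budget, the latter because the majority counter would already need $2^{O(s)}$ bits. The resolution is exactly the technical core of~\cite{LeGNisYak21TQC}: implement a normalized power of the stochastic transition matrix as a product of block encodings that use only $O(1)$ ancilla qubits, and exploit \emph{intermediate} postselection to recycle those ancillas between steps, so that the whole of $Q_x$ fits in $O(s)$ qubits; then recover $\mathrm{sign}(p_a-1/2)$ through the $\delta$-parametrized $\{\ket{+},\ket{-}\}$-basis measurements. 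Confirming that this construction survives the replacement of $\log n$ by $s$, and that its error analysis indeed yields doubly-exponentially small completeness and soundness errors, is the part that needs care.
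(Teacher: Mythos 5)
The paper does not actually prove this statement: Theorem~\ref{th:PostQSPACE} is imported from \cite{LeGNisYak21TQC} (with the extension to general $s$ attributed to \cite{Nis21Per}), and the only argument the paper supplies is the remark that the QTM-based and circuit-based definitions coincide. Your reconstruction of the cited proof is essentially the right one and matches what the paper itself later recapitulates (the three-inclusion cycle; the GapSPACE-style sign test of $q-\tfrac12 p_{post}$ for the upper bound, which is the space-bounded analogue of $\PostBQP\subseteq\PP$; and the block-encoding circuit $Q_x$ with the $\{\ket{+},\ket{-}\}$-basis sweep, which is exactly Lemma~\ref{lm:SimbyPost} and the surrounding discussion in Section~4). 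One concrete slip: your geometric sequence of $\delta$ runs in the wrong direction. Since $p_a=a/2^T$ with $a\neq 2^{T-1}$, the gap $|1/2-p_a|$ can be as small as $2^{-T}$ while $1/2+p_a\ge 1/2$, so with $\delta\le 1$ the $\ket{1}$-amplitude of $\ket{\Psi_\delta}$ is at most $2^{-T}$ and the $\{\ket{+},\ket{-}\}$ measurement is uninformative for every $\delta$ in your range $1,1/2,\dots,2^{-2^{O(s)}}$. You need $\delta$ to sweep \emph{upward} over $1,2,\dots,2^{T}=2^{2^{O(s)}}$ (in the paper's notation $\delta=2^{T-k}$ for $k=T,\dots,0$), so that for some $k$ the two amplitudes become comparable; with that reversal the argument goes through. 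A second, standard point left implicit in your upper-bound sketch is that gate entries such as $1/\sqrt{2}$ and $e^{i\pi/4}$ must be handled (by working over a suitable ring or by $s$-space rational approximation) before $q$ and $p_{post}$ can be cast as GapSPACE-type quantities; this is routine but should be acknowledged.
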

In \cite{LeGNisYak21TQC}, $\PostBQL$ is defined based on the space-bounded quantum Turing machine (QTM), following the definition provided in~\cite{Wat03CC}. 
It is not difficult to see that their proof works for our circuit-based definition. Consequently, the QTM-based definition and the circuit-based definition are equivalent in computational power.
\section{Main Results}

\autoref{th:PostQuSPACE}  provides a formal statement of our main result, which shows that 
the class of promise problems that can be solved with an $s$-space uniform family of \emph{unitary} quantum circuits with postselection
by using $O(s)$ space and $2^{O(s)}$ time in the \emph{bounded-error} setting is equal to the class of promise problems by \emph{unbounded-error}
probabilistic computation with space bound $s$.

\begin{theorem}\label{th:PostQuSPACE}
For any space-constructible function $s:\mathbb{N}\rightarrow\mathbb{N}$ with $s\left(n\right)=\Omega\left(\log{n}\right)$, 
\[
\PrSPACE\left(s\right)=\PostQuSPACE\left(s\right)\left[1-2^{-2^{O\left(s\right)}},2^{-2^{O\left(s\right)}}\right]=\PostBQuSPACE\left(s\right).
\]
In particular, $\PostBQuL=\PL$.
\end{theorem}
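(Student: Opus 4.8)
The plan is to establish the two nontrivial inclusions separately, since $\PostBQuSPACE(s)\subseteq\PostQuSPACE(s)[1-2^{-2^{O(s)}},2^{-2^{O(s)}}]$ would follow from an error-reduction argument and $\PostQuSPACE(s)[1-2^{-2^{O(s)}},2^{-2^{O(s)}}]\subseteq\PostBQSPACE(s)=\PrSPACE(s)$ is immediate from $\PostBQuSPACE(s)\subseteq\PostBQSPACE(s)$ (trivial) together with \autoref{th:PostQSPACE}. So the whole content is the chain
\[
\PrSPACE(s)\subseteq\PostQuSPACE(s)\bigl[1-2^{-2^{O(s)}},2^{-2^{O(s)}}\bigr]\subseteq\PostBQuSPACE(s)\subseteq\PostBQSPACE(s)=\PrSPACE(s),
\]
where the last equality is \autoref{th:PostQSPACE}. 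The third inclusion is by definition, and the middle one is trivial once we have the first, so everything reduces to showing $\PrSPACE(s)\subseteq\PostQuSPACE(s)[1-2^{-2^{O(s)}},2^{-2^{O(s)}}]$. I would prove this by taking the simulation of \cite{LeGNisYak21TQC} witnessing $\PrSPACE(s)\subseteq\PostQSPACE(s)[\cdots]$ and removing all intermediate postselections and measurements space-efficiently.

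Following the technical outline, I would first recall that the \cite{LeGNisYak21TQC} simulation factors as a circuit $Q_x$ preparing (up to normalization) the state $\ket{\Psi_\delta}=(1/2+p_a)\ket{0}+\delta(1/2-p_a)\ket{1}$, where $p_a$ is the acceptance probability of the underlying $\PrSPACE(s)$ machine, followed by a postselection-and-measurement gadget that decides $p_a\gtrless 1/2$ by running $Q_x$ for several values of $\delta$ and measuring in the $\{\ket{+},\ket{-}\}$ basis. Step one is to convert $Q_x$ into a unitary circuit $V_x$: since $Q_x$ contains only intermediate postselections (no intermediate measurements), I replace each postselection on a qubit in the non-postselecting state by coherently incrementing an $O(s)$-qubit counter register using $\land_k(\INCREMENT_{2^n})$ from \autoref{cl:ContINC}; the "bad" branches get carried along but are tagged by a nonzero counter, and the original postselection is deferred to a single final postselection of the all-zero counter (using $\land_k(\PauliX)$-style aggregation as in the remark after \autoref{def:PostQuSPACE}). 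Step two is to handle the second component. Running $V_x$ for the various $\delta$'s must be done sequentially for space reasons, so I would build two unitary subroutines $U_+$ and $U_-$ that, respectively, accumulate the amplitude of $\ket{+}$ and of $\ket{-}$ in the sequence of states produced by iterating $V_x$ over the prescribed values of $\delta$, depositing the accumulated amplitude as the amplitude of the all-zero string on their work registers; this is again done by coherent counting / controlled-increment bookkeeping so that no measurement is needed mid-stream. Step three: run $U_+$ on one register and $U_-$ on an orthogonal register (controlled on a flag qubit in superposition), then postselect the all-zero state on both registers together; the surviving state is overwhelmingly supported on whichever of the two branches corresponds to the true sign of $p_a-1/2$, so measuring the flag qubit decides membership. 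Finally I would verify that the completeness/soundness gap is $1-2^{-2^{O(s)}}$ versus $2^{-2^{O(s)}}$ by tracking how the sign detection amplifies under the $2^{O(s)}$ choices of $\delta$, exactly as in the bound obtained for $Q_x$ itself, and that the $p_{post}>0$ condition is preserved because the all-zero branch always has nonzero amplitude.

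The main obstacle I expect is the second and third steps: making the sequential iteration over $\delta$ both unitary and space-bounded while still preserving the amplitude structure needed for the sign-detection argument. Naively one wants to run many copies of $V_x$ in parallel (as one would in the time-bounded setting), but that costs $2^{O(s)}$ ancilla qubits; doing it sequentially means each run of $V_x$ must be uncomputed and its relevant data folded into an accumulator before the next run, and one must be careful that uncomputation genuinely returns the work qubits to $\ket{0}$ despite the postselection branches — i.e., the counter-increment trick must be reversible and the "bad" mass must not leak into the accumulator in a way that spoils the final postselection. Quantifying the residual error from all of this bookkeeping, and checking it stays within $2^{-2^{O(s)}}$, is the delicate part; the rest (the $\PrSPACE(s)=\PostBQSPACE(s)$ closing equality, the uniformity bookkeeping, the gate-count overheads from Claims~\ref{cl:Cont(g)} and \ref{cl:ContINC}) is routine in our $O(s)$-space, $2^{O(s)}$-time regime. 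I would isolate the unitary simulation of the second component as a separate lemma (the one the excerpt says is proved in Sec.~4) and derive \autoref{th:PostQuSPACE} from it together with \autoref{th:PostQSPACE} and the trivial inclusions above.
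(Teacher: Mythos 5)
Your proposal is correct and follows essentially the same route as the paper: the same chain of inclusions reducing everything to $\PrSPACE(s)\subseteq\PostQuSPACE(s)[1-2^{-2^{O(s)}},2^{-2^{O(s)}}]$, the same conversion of $Q_x$ into a unitary $V_x$ via coherent counter increments, the same sequential accumulator subroutines $U_+$ and $U_-$ run in superposition on a flag qubit, and the same final postselection of the all-zero state. You also correctly identify the delicate point (reversible uncomputation of $V_x$ so the ``bad'' branches stay tagged by a nonzero counter and do not contaminate the all-zero accumulator), which is exactly what the paper's Lemmas on $U_+$ and $U_-$ verify.
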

Theorems~\ref{th:PostQSPACE} and \ref{th:PostQuSPACE} imply that
 intermediate postselections and measurements add no extra computational power, as stated formally in
the following corollary.
\begin{corollary}[Restatement of \autoref{th:main}]\label{cr:PostBQSPACE=PostBQuSPACE}
For any space-constructible function $s:\mathbb{N}\rightarrow\mathbb{N}$ with $s\left(n\right)=\Omega\left(\log{n}\right)$, 
\[ \PostBQuSPACE\left(s\right)=\PostBQSPACE\left(s\right)=\PrSPACE\left(s\right).\]
In particular, $\PostBQL=\PostBQuL$.
\end{corollary}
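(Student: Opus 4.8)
The corollary is immediate once the single inclusion $\PrSPACE(s)\subseteq\PostBQuSPACE(s)$ is known, since $\PostBQuSPACE(s)\subseteq\PostBQSPACE(s)$ holds by definition and $\PostBQSPACE(s)=\PrSPACE(s)$ is \autoref{th:PostQSPACE}. So the plan is to prove directly that $\PrSPACE(s)\subseteq\PostQuSPACE(s)\!\left[1-2^{-2^{O(s)}},\,2^{-2^{O(s)}}\right]$; this also yields the stronger \autoref{th:PostQuSPACE}, and membership in the $[2/3,1/3]$ class then follows trivially.

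Fix $L\in\PrSPACE(s)$; by \cite{Jun85ICALP} we may assume the corresponding probabilistic machine runs in space $O(s)$ and time $2^{O(s)}$, so its acceptance probability $p_a=p_a(x)$ is a dyadic with denominator $2^{2^{O(s)}}$, whence $\epsilon:=\lvert 1/2-p_a\rvert\ge 2^{-2^{O(s)}}$ unless $p_a=1/2$. I would start from the simulation of \cite{LeGNisYak21TQC}: an $s$-space uniform circuit $Q_x(\delta)$ using \emph{intermediate postselections but no intermediate measurements} that, conditioned on all its postselections succeeding, leaves a designated qubit $o$ in (a nonzero scalar multiple of) $\ket{\Psi_\delta}=(1/2+p_a)\ket{0}+\delta(1/2-p_a)\ket{1}$, for a hard-wired parameter $\delta$. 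To make this unitary I would add an $O(s)$-qubit counter register $\mathcal{C}$ and replace each postselection of a qubit onto $\ket{1}$ by ``coherently increment $\mathcal{C}$ if that qubit is $\ket{0}$'' (via \autoref{cl:ContINC}). Since $Q_x$ has $2^{O(s)}$ postselections the counter never overflows, so the sector $\mathcal{C}=0$ is exactly the branch in which all postselections would have succeeded; after a standard uncomputation of $Q_x$'s remaining ancillas in that branch, the resulting unitary circuit $V_x(\delta)$ satisfies $\Pi_{\mathcal{C}=0}\,V_x(\delta)\ket{0\cdots0}=\ket{0\cdots0}\otimes\ket{\tilde\Psi_\delta}_o$ with $\ket{\tilde\Psi_\delta}\propto\ket{\Psi_\delta}$.

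Next I would build subroutines $U_+$ and $U_-$ that reuse one $O(s)$-qubit workspace across all parameters $\delta_j=2^{j}$, $j=0,\dots,M=2^{O(s)}$, where $M$ is chosen so that $2^M$ exceeds $(1/2+p_a)/\epsilon$ for every instance, using the \emph{same} monotone counter $\mathcal{C}$ throughout: $U_+$ performs, for each $j$ and $K=2^{O(s)}$ times over, the steps ``apply $V_x(\delta_j)$; apply $\Hadamard$ to $o$; increment $\mathcal{C}$ if $o=\ket{1}$'' (while $U_-$ increments if $o=\ket{0}$ and then applies $\PauliX$ to $o$). The key invariant is that in the sector $\mathcal{C}=0$ every step ends with the whole workspace back at $\ket{0\cdots0}$ — so each subsequent $V_x$ acts on a genuine all-zero input — whereas any branch that ever leaves that sector stays tagged $\mathcal{C}\ne 0$ forever; a short induction then gives $\bra{0\cdots0}U_{\pm}\ket{0\cdots0}=\bigl(\prod_{j=0}^{M}\tilde c_j^{\pm}\bigr)^{K}$, where $\tilde c_j^{\pm}\propto(1/2+p_a)\pm\delta_j(1/2-p_a)$ (the proportionality constant depends on $j$ but not on the sign), and the rest of $U_{\pm}\ket{0\cdots0}$ is orthogonal to $\ket{0\cdots0}$. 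Finally I would put a control qubit in $\ket{+}$, apply $U_+$ and $U_-$ to a shared register conditioned on the control being $\ket{0}$ and $\ket{1}$ respectively, postselect the all-zero state on that register (collapsed to a single postselection qubit by $\land_k(\PauliX)$), and measure the control. Conditioned on the postselection, the control is proportional to $\bigl(\prod_j\tilde c_j^+\bigr)^K\ket{0}+\bigl(\prod_j\tilde c_j^-\bigr)^K\ket{1}$; since some $\delta_{j^*}$ lies within a factor $2$ of $(1/2+p_a)/\epsilon$, the ratio $\lvert\tilde c_{j^*}^+/\tilde c_{j^*}^-\rvert$ is $<1/3$ when $p_a>1/2$ and $>3$ when $p_a<1/2$ while every other such ratio is $\le1$ and $\ge1$ respectively, so $\prod_j\lvert\tilde c_j^+/\tilde c_j^-\rvert^{2}$ is $\le1/9$ or $\ge9$ accordingly; hence with $K=2^{O(s)}$ the event ``the control is $\ket{1}$'' has probability $\ge 1-2^{-2^{O(s)}}$ on $x\in L_Y$ and $\le 2^{-2^{O(s)}}$ on $x\in L_N$. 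One also checks $p_{post}>0$, because $\prod_j\tilde c_j^+$ and $\prod_j\tilde c_j^-$ cannot vanish simultaneously. The whole circuit uses $O(s)$ qubits and $2^{O(s)}$ elementary gates and is $s$-space uniform, so it witnesses the inclusion.

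The step I expect to be the main obstacle is the construction of $U_+$ and $U_-$: reusing a bounded workspace across exponentially many sequential invocations of $V_x$ \emph{without any intermediate measurement} while still extracting, for each invocation, whether $o$ is $\ket{+}$ or $\ket{-}$. The two obvious attempts both fail — ``run $V_x$, copy out a bit, run $V_x^{\dagger}$'' does not restore the workspace once the copied bit has entangled with it, and a single $\delta$-superposed run of $V_x$ followed by any fixed linear recombination yields an amplitude that is affine in the unknown $\epsilon$ and therefore cannot be tuned to the right scale for all instances. The resolution I would pursue — folding \emph{both} the internal postselections of $V_x$ and the per-round $\{\ket{+},\ket{-}\}$-selection into one monotone counter whose zero sector is self-consistent, so that the amplitude surviving the final postselection is a true product over the $\delta_j$ — is precisely what makes $U_{\pm}$ at once unitary, $O(s)$-space, and sharply discriminating; the real work lies in verifying this invariant, the error bounds, and the positivity of $p_{post}$, and in confirming that the circuit of \cite{LeGNisYak21TQC} (possibly after the indicated uncomputation) indeed has the clean form assumed for $V_x$.
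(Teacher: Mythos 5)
Your plan is correct and follows essentially the same route as the paper: reduce everything to $\PrSPACE(s)\subseteq\PostQuSPACE(s)\bigl[1-2^{-2^{O(s)}},2^{-2^{O(s)}}\bigr]$, unitarize the Le~Gall--Nishimura--Yakaryilmaz circuit by replacing each intermediate postselection with a coherent increment of a monotone counter, accumulate the products of the $\ket{+}$- and $\ket{-}$-amplitudes over the parameters $\delta=2^j$ in the all-zero amplitude of two subroutines $U_\pm$, run $U_+$ and $U_-$ in superposition controlled by a single qubit, postselect the all-zero state, and read out the control. The one place your construction differs is inside $U_\pm$: the paper tags the bad branch on a second counter, conditionally applies $V_x^{\dagger}$, and tags again (so each round contributes $\abs{\gamma_k\alpha_k}^2$ to the surviving amplitude), whereas you apply a Hadamard to the output qubit and tag if it is $\ket{1}$ (contributing $\gamma_k\alpha_k$ per round, with a single shared counter and no controlled inverse); both versions preserve the invariant that the counter-zero sector is exactly the all-zero state after every round --- yours because Lemma~\ref{lm:SimbyPost} already guarantees the good branch of $V_x$ returns all ancillas to zero --- and both give the required exponential ratio after squaring, so your variant is a valid (and mildly simpler) implementation of the same idea.
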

\begin{proof}[Proof of \autoref{th:PostQuSPACE}]
By the definition, we have 
$\PostQuSPACE\left(s\right)\left[1-2^{-2^{O\left(s\right)}},2^{-2^{O\left(s\right)}}\right]\subseteq\PostBQuSPACE\left(s\right)$.
Then, the theorem follows from Lemmas~\ref{lm:upperbound} and \ref{lm:lowerbound}, stated as follows. 
\end{proof}
\begin{lemma}
\label{lm:upperbound}
For any space-constructible function $s:\mathbb{N}\rightarrow\mathbb{N}$ with $s\left(n\right)=\Omega\left(\log{n}\right)$, 
\[ \PostBQuSPACE\left(s\right)\subseteq\PrSPACE\left(s\right).\]
\end{lemma}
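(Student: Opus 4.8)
\textbf{Proof proposal for \autoref{lm:upperbound}.}
The plan is to simulate an $s$-space unitary quantum circuit with postselection by an unbounded-error probabilistic Turing machine running in space $O(s)$, which places the associated promise problem in $\PrSPACE(s)$. Fix $L\in\PostBQuSPACE(s)$ with the $s$-space uniform family $\{U_x\}$ acting on $m=O(s(|x|))$ qubits with $T=2^{O(s(|x|))}$ elementary gates, postselection on the first qubit, and the output read off the second qubit. Conditioned on the first qubit being $\ket 1$, the acceptance probability is $q_{\mathrm{acc}}/p_{\mathrm{post}}$, where $p_{\mathrm{post}}=\langle\psi|(\ket1\bra1_1\otimes I)|\psi\rangle$ and $q_{\mathrm{acc}}=\langle\psi|(\ket1\bra1_1\otimes\ket1\bra1_2\otimes I)|\psi\rangle$ for $\ket\psi=U_x\ket{0}^{\otimes m}$. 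So I want to decide whether $q_{\mathrm{acc}}/p_{\mathrm{post}}\ge c$ or $\le d$; equivalently, whether $q_{\mathrm{acc}}-c\,p_{\mathrm{post}}$ is positive or non-positive (and, in the YES case, bounded away from $0$, using $c-d\ge 2^{-2^{O(s)}}$ and $p_{\mathrm{post}}\ge 2^{-2^{O(s)}}$, which follows from the fact that all amplitudes are algebraic numbers of the required form — see below).

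First I would reduce everything to computing a single bilinear form in the entries of a matrix product. Each gate $U_t$ is a $2^m\times 2^m$ unitary whose nonzero entries lie in a fixed finite set generated by $\{1/\sqrt2, e^{i\pi/4}\}$ (or, if the space-efficient Solovay–Kitaev theorem is invoked, entries approximable to error $2^{-2^{O(s)}}$ in space $O(s)$); in either case each entry of each $U_t$ is computable to $2^{O(s)}$ bits of precision in space $O(s)$, and a single matrix entry $(U_t)_{jk}$ is indexable by $2m=O(s)$-bit strings. The amplitude $\langle j|U_x|0^m\rangle$ equals $\sum_{j_1,\dots,j_{T-1}} (U_T)_{j,j_{T-1}}\cdots(U_1)_{j_1,0^m}$, a sum of $2^{O(s)\cdot T}$-many products, each product being a product of $T=2^{O(s)}$ numbers of $2^{O(s)}$-bit precision. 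The key structural point is that such an iterated matrix-vector product can be computed by a deterministic machine in space $O(s)$ — one maintains the running index $j_t$ and a running partial sum of the amplitude, recursing on $t$; the standard unbounded-error trick is that $\PrSPACE(s)$ (equivalently, the class of problems logspace-reducible, after padding, to computing the sign of such sums) can handle the exponential-size summation because the sign of an iterated product/sum of $s$-space-computable exponential-precision matrix entries can be decided in $\PrSPACE(s)$. Concretely, $q_{\mathrm{acc}}$ and $p_{\mathrm{post}}$ are each of the form $\sum_{j:\,j_1=1} |\langle j|U_x|0^m\rangle|^2$ (with the extra constraint $j_2=1$ for $q_{\mathrm{acc}}$), i.e. a positive sum over $2^{O(s)}$ index strings $j$ of the squared modulus of an iterated matrix product; writing amplitudes as real and imaginary parts doubles the dimension but does not change the form. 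Hence $q_{\mathrm{acc}}-c\,p_{\mathrm{post}}$ is (a rational multiple of) a $\Integer$-linear combination of products of $s$-space-computable numbers summed over exponentially many indices.

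Then I would invoke the known machinery that such a quantity's sign is decidable in $\PrSPACE(s)$: this is exactly the route by which $\PrQSPACE(s)\subseteq\PrSPACE(s)$ is proved in Watrous~\cite{Wat01JCSS,Wat03CC} and in the classical space-bounded literature~\cite{BorCooPip83InfoComp,Jun85ICALP}, and the same argument applies verbatim here because, after deferring postselection and measurement to the end (already done, since $U_x$ is unitary), the acceptance condition is precisely a threshold on a fixed bilinear form of the final state vector. Indeed, $\PostBQuSPACE(s)$ can be seen as a sub-case: the unitary circuit $U_x$ followed by a final two-outcome measurement is just a $\PrQSPACE(s)$-type computation, and conditioning on postselection success amounts to comparing two such acceptance probabilities, which $\PrSPACE(s)$ can do since $\PrSPACE(s)$ is closed under the relevant operations (and, being equal to its own complement and closed under the needed reductions, can decide the sign of a difference of two $\PrSPACE(s)$-acceptance-probability quantities). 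The $2^{-2^{O(s)}}$ separation guaranteed in the definition is only needed to be sure the comparison is well-defined (strict on both sides) and is automatic from algebraicity of the amplitudes; it plays no role in the simulation itself, since $\PrSPACE$ is an unbounded-error class.

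The main obstacle is the bookkeeping needed to certify that the whole computation — indexing the $2^{O(s)}$ gates via the $s$-space uniformity DTM, indexing the $2^m$-dimensional vectors, computing matrix entries to $2^{O(s)}$ bits, and carrying out the nested exponential summation with exponential-precision arithmetic — genuinely fits in space $O(s)$, and that the final sign comparison can be phrased as an unbounded-error probabilistic computation. This is routine but delicate: one must choose the encoding of amplitudes (e.g. as ratios of $2^{O(s)}$-bit integers, or via the Chinese-remainder / small-primes representation used in~\cite{BorCooPip83InfoComp,Jun85ICALP}) so that addition and multiplication stay in $O(s)$ space, and one must recompute rather than store all intermediate vectors. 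I expect to handle this exactly as in the proof of $\PrQSPACE(s)=\PrSPACE(s)$, noting that the extra feature here (the ratio $q_{\mathrm{acc}}/p_{\mathrm{post}}$ versus a threshold) is absorbed by the closure properties of $\PrSPACE(s)$; everything else is a direct transcription of that argument.
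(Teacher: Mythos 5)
Your proposal is essentially correct, but it takes a much longer route than the paper does. The paper's entire proof of \autoref{lm:upperbound} is two sentences: a unitary circuit with postselection and measurement at the end is, by definition, a special case of a circuit with (possibly intermediate) postselections and measurements, so $\PostBQuSPACE(s)\subseteq\PostBQSPACE(s)$; and $\PostBQSPACE(s)=\PrSPACE(s)$ is already available as \autoref{th:PostQSPACE}, quoted from Le~Gall, Nishimura and Yakaryilmaz. What you do instead is re-derive the hard direction of that cited theorem from scratch for the unitary special case: express $q_{\mathrm{acc}}$ and $p_{\mathrm{post}}$ as exponential sums of products of $O(s)$-space-computable matrix entries, reduce acceptance to the sign of $q_{\mathrm{acc}}-\theta\,p_{\mathrm{post}}$, and invoke the Watrous/Borodin--Cook--Pippenger/Jung machinery showing that such signs are decidable in $\PrSPACE(s)$. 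That plan is sound (the unitary case is in fact the easiest instance of this machinery, since the final state is a single pure-state matrix product and the difference of two Gap-type functions is again Gap-type), and your observations that the $2^{-2^{O(s)}}$ gap and the lower bound on $p_{\mathrm{post}}$ are not needed for an unbounded-error simulation are correct. Two small points: you should take the threshold $\theta=(c+d)/2$ rather than $\theta=c$ so that the sign is strict on both the YES and NO sides, matching the strict-inequality convention the paper adopts for $\PrSPACE(s)$; and the arithmetic bookkeeping you defer (handling $1/\sqrt2$ and $e^{i\pi/4}$ by clearing denominators into algebraic integers) is exactly the part that the citation to \autoref{th:PostQSPACE} lets the paper skip. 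In short, your argument would work, but it proves considerably more than is needed given what the paper has already established.
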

\begin{proof}
By the definition, we have $\PostBQuSPACE\left(s\right)\subseteq\PostBQSPACE\left(s\right)$. Since $\PostBQSPACE\left(s\right)=\PrSPACE\left(s\right)$ by \autoref{th:PostQSPACE}, the lemma follows.
\end{proof}
\begin{lemma}\label{lm:lowerbound}
For any space-constructible function $s:\mathbb{N}\rightarrow\mathbb{N}$ with $s\left(n\right)=\Omega\left(\log{n}\right)$, 
\[ 
\PrSPACE(s)\subseteq
\PostQuSPACE(s)\left[1-2^{-2^{O\left(s\right)}},2^{-2^{O(s)}}\right].
\]
\end{lemma}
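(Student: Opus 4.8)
The plan is to make constructive both stages of the simulation of $\PrSPACE(s)$ by space-bounded postselected quantum computation due to Le Gall, Nishimura and Yakaryilmaz~\cite{LeGNisYak21TQC}, deferring every postselection and measurement to the end. Recall that \cite{LeGNisYak21TQC} first builds, from an unbounded-error PTM with acceptance probability $p_a$ on input $x$, a circuit $Q_x$ that contains \emph{intermediate postselections only} and that prepares, up to normalization, the state $\ket{\Psi_\delta}=(1/2+p_a)\ket{0}+\delta(1/2-p_a)\ket{1}$ on a designated qubit, where $\delta$ is a parameter supplied to the circuit (so that $x\in L_Y$ iff $p_a>1/2$ and $x\in L_N$ iff $p_a<1/2$). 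Its second stage runs $Q_x$ repeatedly over a range of values of $\delta$, measures each output in the $\{\ket{+},\ket{-}\}$ basis, and decides the side of $1/2$ on which $p_a$ lies from the measurement statistics together with a final postselection, in the spirit of Aaronson~\cite{Aar05RSPA}. I would reproduce both stages with no intermediate nonunitary operation and within $O(s)$ qubits and $2^{O(s)}$ gates.

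First I would unitarize $Q_x$. As $Q_x$ has no intermediate measurements, each intermediate postselection — on a qubit then carrying $\alpha\ket{0}+\beta\ket{1}$ with the non-postselecting value $\ket{0}$ — is replaced by a coherent increment: apply $\land(\INCREMENT_{2^N})$, controlled on that qubit being $\ket{0}$, to a global counter register of $N\in O(s)$ qubits chosen large enough never to overflow in $2^{O(s)}$ steps, and then simply leave the ``used'' qubit in place. Postselecting this single counter to be $\ket{0}$ at the very end is equivalent to having postselected each qubit $\ket{1}$ along the way, and by \autoref{cl:ContINC} each such controlled increment costs only $O(s+s^2)$ gates and $O(1)$ reusable ancillas, which is negligible in our budget. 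The ``used'' postselection qubits together with the single counter stand in for the space that $Q_x$ used to free by postselection; a point to verify is that the qubit accounting of \cite{LeGNisYak21TQC} still closes with this bounded pool of extra qubits, so that the resulting unitary circuit $V_x$ still runs in $O(s)$ space. After postselecting its counter, $V_x\ket{0\cdots 0}$ equals $\ket{\Psi_\delta}$ up to normalization on the designated qubit.

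The harder step is the second stage, whose repeated invocations of $Q_x$ are interleaved with $\{\ket{+},\ket{-}\}$-measurements and cannot be parallelized without blowing the space budget. My approach is to build two unitary subroutines $U_+$ and $U_-$ that run $V_x$ for the prescribed geometrically spaced values of $\delta$ \emph{one after another in the same working registers} — uncomputing each run before the next so that the workspace is reused — while coherently funneling, run by run, the amplitude attached to the $\ket{+}$ (for $U_+$) or $\ket{-}$ (for $U_-$) component of $\ket{\Psi_\delta}$ into the amplitude of the all-zero string of a small $O(s)$-qubit accumulator register; concretely a Hadamard after each run turns the $\{\ket{+},\ket{-}\}$ distinction into a computational-basis bit, which is folded into the accumulator by controlled increments. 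The all-zero amplitude $a_+$ of $U_+\ket{0\cdots 0}$ (respectively $a_-$ of $U_-\ket{0\cdots 0}$) is then a known function of the amplitudes $\{\langle+|\Psi_\delta\rangle\}_\delta$ (respectively $\{\langle-|\Psi_\delta\rangle\}_\delta$) that is large precisely when $p_a<1/2$ (respectively $p_a>1/2$) — the separation exploited by Aaronson's argument~\cite{Aar05RSPA}. Finally I would place a ``which-branch'' qubit in the state $\ket{+}$, run $U_+$ and $U_-$ in the two orthogonal subspaces it selects, and at the end postselect the all-zero string on the aggregated register consisting of the accumulators, the deferred-postselection counters of all $V_x$ runs, and the work registers; by the remark after \autoref{def:PostQuSPACE} this aggregation is free. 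Conditioned on that postselection, essentially all amplitude lies in the $U_+$-branch when $p_a<1/2$ and in the $U_-$-branch when $p_a>1/2$, so measuring the which-branch qubit decides the problem; choosing the number of $\delta$-values and the counter sizes to be $2^{O(s)}$ drives the ratio of the wrong to the right accumulated amplitude below $2^{-2^{O(s)}}$, which is exactly the completeness/soundness gap claimed, the same error analysis already underlying \autoref{th:PostQSPACE}.

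The main obstacle I anticipate is keeping the second stage within $O(s)$ qubits: accumulating contributions from many runs of $V_x$ naively wants a fresh workspace per run, whereas we have only $O(s)$ qubits total. The fix is that $V_x$ is unitary and can be fully uncomputed before the next run, so only the tiny accumulator and the counters persist; but then one must ensure the deferred-postselection counter of each $V_x$ run is not left entangled with the accumulator in a way that obstructs this reuse — handled by incrementing into the single \emph{global} deferred-postselection register rather than a per-run one, and by postselecting that register (together with the accumulator and workspace) only at the very end. Modulo these bookkeeping points, correctness, the $2^{-2^{O(s)}}$ error, the $O(s)$-space and $2^{O(s)}$-time bounds, and $s$-space uniformity all follow by combining \autoref{cl:ContINC}, the analysis of \cite{LeGNisYak21TQC}, and a direct amplitude computation.
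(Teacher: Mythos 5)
Your construction follows essentially the same route as the paper: unitarize $Q_x$ by replacing each intermediate postselection with a controlled increment of a global counter (the paper's $V_x$), build two subroutines $U_+$ and $U_-$ that sequentially run and uncompute $V_x$ over all values of $\delta$ while funneling the $\ket{+}$- resp.\ $\ket{-}$-amplitudes into the all-zero amplitude of an accumulator, and finally run the two subroutines in orthogonal subspaces selected by a control qubit in $\ket{+}$, postselect the all-zero state, and measure that control qubit. The architecture, the space accounting, and the use of \autoref{cl:ContINC} all match the paper's proof.

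The one genuine gap is in the error amplification. You claim that ``choosing the number of $\delta$-values and the counter sizes to be $2^{O(s)}$ drives the ratio of the wrong to the right accumulated amplitude below $2^{-2^{O(s)}}$.'' This is not what happens: the set of $\delta$-values is fixed by the simulation ($\delta = 2^{T-k}$ for $k=0,\dots,T$), and a single sweep over them yields an accumulated amplitude ratio $\prod_k \abs{\beta_k}^2/\abs{\alpha_k}^2$ that is only a \emph{constant} bounded away from $1$ (the paper gets $9/25$), because the per-$k$ ratio tends to $1$ geometrically fast on both sides of the critical scale where $2^{T-k}(1/2-p_a)\approx 1/2+p_a$; enlarging the counters does not change this. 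The claimed completeness/soundness $1-2^{-2^{O(s)}}$ and $2^{-2^{O(s)}}$ therefore do not follow from one pass. The paper closes this by applying the whole subroutine $r\in 2^{O(s)}$ times, i.e.\ using $(U_+)^r$ and $(U_-)^r$, which raises the constant ratio to the $r$-th power and gives error below $2^{-r}$; your architecture supports this repetition for free (the workspace is uncomputed each round and the accumulator persists), but the repetition is a necessary additional step that your write-up omits and replaces with an incorrect mechanism.
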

The proof is provided in the following section.

\section{Proof of~\autoref{lm:lowerbound}}
To prove~\autoref{lm:lowerbound},  we use the fact proved in~\cite{LeGNisYak21TQC}.
\begin{lemma}[\cite{LeGNisYak21TQC}]\label{lm:SimbyPost}
Suppose that, for any input $x$, a PTM with space bound $s=s\left(\left|x\right|\right)$ accepts with probability $p_a=p_a(x)$ and rejects with probability $1-p_a$ after running in a prespecified time $T(\left|x\right|)\in2^{O\left(s\right)}$. There exists an s-space uniform family of quantum circuits $Q_x$ on $m+\left\lceil\log_2{(T+1)}\right\rceil$ qubits 
for $m\in O(s)$ that consist of $2^{O\left(s\right)}$ elementary gates in the gate set $G$ and intermediate postselections such that, for every $k\in\{0,\ldots,\ T\}$, it holds that
\[
Q_x\ket{0^{m}}\ket{k}
=\frac{\ket{\Psi_k}\ket{0^{m-1}}\ket{k}}{\norm{\ket{\Psi_k}\ket{0^{m-1}}\ket{k}}},
\]
where $\ket{\Psi_k}\equiv\left(1/2+p_a\right)\left|0\right\rangle+2^{T-k}\left(1/2-p_a\right)\left|1\right\rangle$.
\end{lemma}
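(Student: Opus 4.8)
The plan is to realize $Q_x$ as a block encoding of the probabilistic computation, with the intermediate postselections doing two jobs: recycling ancillas so that the whole simulation fits in $O(s)$ qubits, and converting the acceptance \emph{probability} $p_a$ into a linear \emph{amplitude}. Fix $x$ and let $N=2^{O(s)}$ be the number of configurations of the PTM, with column-stochastic transition matrix $A$, start configuration $c_0$, and accepting-configuration indicator $\pi_{\mathrm{acc}}$, so that $p_a=\bra{\pi_{\mathrm{acc}}}A^{T}\ket{c_0}$. Since the transition function is computable in space $O(s)$, a single step is realized by a unitary $W$ acting on the $O(s)$-qubit configuration register together with $O(s)$ reusable coin qubits, designed so that \emph{postselecting the coin qubits to $\ket{0}$} applies $A$ (up to normalization) to the configuration register. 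This is precisely the block-encoding primitive for which the gate set $G_2$ was reserved, and postselection is what lets the same $O(s)$ coin qubits be reused across all $T=2^{O(s)}$ steps rather than requiring $T$ fresh qubits.

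First I would compose $T$ copies of $W$ between a preparation of $\ket{c_0}$ and a coherent recognition of $\pi_{\mathrm{acc}}$; after postselecting every coin register to $\ket{0}$ this leaves a designated output qubit whose amplitude is proportional to $p_a$. The $T$-fold product subnormalizes $A^{T}$ by a factor that is doubly exponential in $s$, and it is exactly the renormalization performed by postselecting the ancillas to $\ket{0}$ that restores $p_a$ as a genuine amplitude while keeping the postselection probability $p_{\mathrm{post}}$ strictly positive.

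Next I would install the constant offset $1/2$ and the $k$-dependent amplification. The pair of target amplitudes $1/2+p_a$ and $1/2-p_a$ is a fixed balanced term $\tfrac12\ket0+\tfrac12\ket1$ plus the correction $p_a(\ket0-\ket1)$, so I would prepare a balanced output qubit and then coherently add the $p_a$-block with a sign correlated to the output qubit (a single controlled sign flip), the relative weight $1/2$ against the $p_a$-block being fixed by a constant single-qubit rotation on a selection qubit that is later postselected to $\ket0$. The factor $2^{T-k}$ on the $\ket1$ branch cannot be realized by a bounded-size rotation because $T-k$ may be as large as $2^{O(s)}$; instead I would apply, controlled on the counter value $k$, exactly $T-k$ rounds of a postselection gadget that doubles the $\ket1$-amplitude (each round multiplying the surviving amplitude by $2$). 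All counter-controlled operations read $\ket{k}$ without overwriting it, so the counter register is returned unchanged.

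Finally I would postselect the configuration register, all coin registers, the selection qubit, and the doubling-gadget qubits to their designated states, verifying that this clears the middle register to $\ket{0^{m-1}}$, keeps the counter at $\ket{k}$, and leaves the first qubit in $\ket{\Psi_k}/\norm{\ket{\Psi_k}\ket{0^{m-1}}\ket{k}}$ for every $k$ simultaneously. Uniformity follows because the transition function, the indicator $\pi_{\mathrm{acc}}$, and all fixed rotations are computable in space $O(s)$, while the circuit uses $m=O(s)$ qubits and $2^{O(s)}$ gates from $G$. I expect the main obstacle to be the doubly-exponential $k$-dependent factor $2^{T-k}$: realizing it by postselection-driven doubling while keeping the ancilla and counter registers clean and guaranteeing $p_{\mathrm{post}}>0$, and making the offset, the amplification, and the $p_a$-block compose coherently so that the \emph{same} circuit outputs exactly $\ket{\Psi_k}$ for all $k$ (with $k$ only read from the counter), is the delicate part of the construction.
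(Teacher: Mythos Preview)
The paper does not prove this lemma at all: it is quoted verbatim as a fact from \cite{LeGNisYak21TQC} and used as a black box. So there is no ``paper's own proof'' to compare your proposal against. What the paper does say is consistent with your sketch: the gate set $G_2$ is explicitly described as ``a constant number of additional elementary gates used for block encoding in~\cite{LeGNisYak21TQC}'', and the surrounding discussion confirms that $Q_x$ uses intermediate postselections (and no intermediate measurements) to reuse ancillas and to produce $\ket{\Psi_k}$ for each $k$ read off the counter register.

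Your outline matches the architecture one expects from the cited work: a block encoding of the column-stochastic transition matrix with postselected coin qubits to recycle the $O(s)$ ancillas across $T$ steps, followed by a constant offset and a $k$-controlled chain of postselection-driven amplitude doublings to install the $2^{T-k}$ factor. The one place where your write-up is loosest is the offset step: you want the \emph{unnormalized} output to be exactly $(1/2+p_a)\ket0+2^{T-k}(1/2-p_a)\ket1$, not merely proportional to it with a $k$-dependent constant, because the downstream analysis in the paper compares $\alpha_k=\braket{+}{\Psi_k}$ and $\beta_k=\braket{-}{\Psi_k}$ across different $k$. In the cited construction the constant $1/2$ is produced by running an independent fair branch alongside the $p_a$-block (rather than a single rotation on a selection qubit), so that after postselection the two amplitudes add with the correct absolute weights; you should make sure your ``balanced term plus correction'' realization actually yields the stated coefficients and not $c_k\bigl((1/2+p_a)\ket0+2^{T-k}(1/2-p_a)\ket1\bigr)$ with some stray $c_k$. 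Apart from that, your plan is a faithful reconstruction of the cited result and would suffice as a proof sketch here.
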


Let $L$ be a promise problem in $\PrSPACE(s)$. Then, there exists a PTM $M$ that recognizes $L$ with unbounded error in space $O(s\left(|x|\right))$ on input $x$. 
By the result by Jung~\cite{Jun85ICALP} (see also~\cite{Sak96CCC}),
we assume without loss of generality that $M$ runs in time $T(\abs{x})\in2^{O\left(s(\abs{x}))\right)}$.
Since the computation path is split into two paths in each step with equal probability, the accepting probability $p_a$ is of the form $a/2^T$ for some integer $a\in\{0,\cdots,2^T\}\setminus2^{T-1}$ (assuming $p_a\neq 1/2$ without loss of generality).  There exist an $s$-space uniform family of quantum circuits $Q_x$ 
defined in \autoref{lm:SimbyPost}.
Note that $Q_x$ makes intermediate postselections and thus $Q_x$ is not unitary.
Since we have assumed $p_a\neq1/2$, we can decide whether $p_a$ is larger or smaller than 1/2 with unbounded error by measuring $ \left|\Psi_T\right\rangle$ in the basis $\{\left|+\right\rangle,\left|-\right\rangle\}$. To distinguish the two cases with bounded error, we need to reduce error probability.  For this, Le Gall et al.~\cite{LeGNisYak21TQC} uses  essentially the same idea as is used in~\cite{Aar05RSPA}, 
repeating the following operations for every $k$: prepare $\left|\Psi_k\right\rangle$ and measure it in the basis $\{\left|+\right\rangle,\left|-\right\rangle\}$. Since the qubits 
on which measurements or postselections have been made
can be reused by initializing them using block encoding with postselection, the space requirement is bounded by $O\left(s\right)$. Thus, intermediate postselections and measurements play a key role in space efficiency.

\subsection{Base Unitary Circuit \texorpdfstring{$V_x$}{Vx}}\label{subsec:Vx}
Our goal is to move every postselection and measurement down to the end of computation while increasing the space requirement by at most a constant factor.

This is not difficult for the $Q_x$ part. The following modification can make $Q_x$ unitary: We prepare an $N\in O(s)$ bit counter $C$ initialized to the all-zero state $\left|0^N\right\rangle$ in a quantum register $\reg{C}$. Here, we take a sufficiently large integer in $O(s)$ as $N$.
Then, every time postselection is made in $Q_x$, we instead increment the counter $C$ coherently if the postselection qubit is in the non-postselecting state, and perform the other operations (i.e., unitary gates) in the same way as in the original circuit $Q_x$. Recall that non-postselecting state is $\left|0\right\rangle$. Thus, the counter $C$ is incremented by applying the $\PauliX$ gate to the postselection qubit, applying $\land(\INCREMENT_{2^N})$ gate controlled by that qubit, and then applying the X gate to that qubit, namely, 
$\left(\PauliX\otimes \Identity \right)\left(\land\left(\INCREMENT_{2^N}\right)\right)\left(\PauliX\otimes \Identity\right)$.
Let $V_x$ denote the modified circuit. 

By the above construction and the standard analysis (e.g., \cite{Wat01JCSS}), if we measure the counter $C$ and postselect the all-zero state after applying $V_x $ to the working and counter registers initialized to the all-zero state and the register storing the argument $k$, then the output state is $\ket{\Psi_k}$ up to a normalizing factor.
More concretely, every time postselection is made in $Q_x$, the modified circuit $V_x$ moves the non-postselection state into the space associated with the counter value being non-zero, that is, the space orthogonal to the space where the postselecting state lies. By setting $N$ so that the maximum counter value $2^{N}-1$ is larger than the number of postselections in $Q_x$, it holds that, once the counter $C$ is incremented, the counter value never returns to zero. Thus, the quantum interference between the states associated with the counter-values being zero and non-zero never occurs. This implies that, if the content of counter register $\reg{C}$ is zero, the state must be projected to the postselecting state
in \emph{every} postselection points in $Q_x$, and thus
the entire register except $\reg{C}$ is in the state that is equal to $Q_x\ket{0^m}\ket{k}$.
Thus, for certain normalized states $\ket{bad_k(j)}$ and $\ell=m+O(1)\in O(s)$,
we can write 
\begin{equation}\label{eq:Vx}
V_x\ket{0^\ell}_{\reg{R}}\ket{0^N}_{\reg{C}}\ket{k}_{\reg{K}}
=\left[\gamma_k\ket{\Psi_k}_{\reg{R}1}\ket{0^{\ell-1}}_{\reg{R}2}\ket{0^N}_{\reg{C}}+\sqrt{1-\norm{\gamma_k\ket{\Psi_k}}^2}\sum_{j\geq1}\ket{bad_k(j)}_{\reg{R}}\ket{j}_{\reg{C}}
\right]
\ket{k}_{\reg{K}},
\end{equation}
where $0<\gamma_k<1$, and
\begin{equation}\label{eq:Psik}
\ket{\Psi_k}\equiv\left(1/2+p_a\right)\ket{0}+2^{T-k}\left(1/2-p_a\right)\ket{1}
=\alpha_k\ket{+}+\beta_k\ket{-},
\end{equation}
for
$\alpha_k=\braket{+}{\Psi_k}$ and $\beta_k=\braket{-}{\Psi_k}$. Here, register $\reg{K}$ consists of $\ceil{\log_2 (T+1)}$ qubits and stores the argument $k\in \{0,\dots,T\}$. The first register $\reg{R}\equiv(\reg{R}_1,\reg{R}_2)$ is the working register except registers $\reg{C}$ and $\reg{K}$, where $\reg{R}_1$ is the subregister of $\reg{R}$ corresponding to the first qubit and $\reg{R}_2$ consists of the remaining qubits.
Note that $V_x$ uses the register $\reg{C}$  in addition to registers $(\reg{R},\reg{K})$ of $O(s)$ qubits, and applies $\land (\INCREMENT_{2^N})$ instead of every intermediate postselection made in $Q_x$. 
Since $Q_x$ consists of $2^{O\left(s\right)}$ gates in $G$ and intermediate postselections on $O(s)$ qubits, and
since \autoref{cl:ContINC} implies that
$\land (\INCREMENT_{2^N})$ is implementable with $O( N^2)\ (=O(s))$ $\CNOT$ and $\magicT$ gates with $O(1)$ ancilla qubits, which are reusable for other $\land (\INCREMENT_{2^N})$,
it holds that $V_x$ consists of $2^{O(s)}$ gates in $G$ and acts on $O(s)$ qubits.

\subsection{Subroutine Unitary Circuits \texorpdfstring{$U_+$}{U+} and \texorpdfstring{$U_-$}{U-}}
In the following subsections, we will describe the entire algorithm, which includes the error-reduction step. For this, we first provide two unitary subroutines $U_+$ and $U_-$ 
in \autoref{fig:U+U-}.
They use $V_x$ and act on registers $(\reg{R},\reg{C},\reg{D},\reg{K})$, where $\reg{D}$ is an $O(s)$-qubit quantum register used as another $O\left(s\right)$-bit counter $D$. We assume without loss of generality that the two registers $\reg{C}$ and $\reg{D}$ consist of $N$ qubits for sufficiently large $N\in O\left(s\right)$.
\begin{figure}[tb]
\hrule\vspace{3mm}
\begin{center}
\textsc{Subroutine $U_+$ associated with $V_x$}
\end{center}
Repeat the following steps $T+1$ times.
\begin{enumerate}
\item Perform $V_x$ on $(\reg{R},\reg{C},\reg{K})$. 
\item If the content of $\reg{C}$ is non-zero or the first qubit in $\reg{R}$ is in state $\left|-\right\rangle$, then apply $\INCREMENT_{2^N}$ to register $\reg{D}$ to increment the counter ${D}$.
\item If the content of $\reg{D}$ is zero, then invert the Step1 on $(\reg{R},\reg{C},\reg{K})$, i.e, apply $V_x^\dag$.
\item If the content of $(\reg{R},\reg{C})$ is not all-zero, then 
apply $\INCREMENT_{2^N}$ to register $\reg{D}$ to
increment the counter ${D}$.
\item Apply $\INCREMENT_{T+1}$ to register $\reg{K}$ to
increment the content in register $\reg{K}$.
\end{enumerate}
\vspace{3mm} 
Apply $\INCREMENT^\dag_{T+1}$ $T+1$ times
to initialize register $\reg{K}$ to the all-zero state.
\vspace{3mm}\hrule
\begin{center}
\textsc{Subroutine $U_-$ associated with $V_x$}
\end{center}\vspace{3mm}
Same as $U_+$ except that Step 2 is replaced with the following operation.
\begin{enumerate}
\setcounter{enumi}{1}
\item If the content of $\reg{C}$ is non-zero or the first qubit in $\reg{R}$ is in state $\left|+\right\rangle$, then apply $\INCREMENT_{2^N}$ to register $\reg{D}$ to increment the counter ${D}$.
\end{enumerate}
\vspace{3mm}\hrule
\caption{Subroutines $U_+$ and $U_-$ associated with $V_x$.}
\label{fig:U+U-}
\end{figure}
The following lemmas tell us about the actions of $U_+$ and $U_-$.
\begin{lemma}\label{lm:U+}
For every $k\in \{0,\dots,T\}$, let $\alpha_k$ and $\gamma_k$ be the coefficients appearing in Eqs.~(\ref{eq:Vx}) and (\ref{eq:Psik}).
Then, $U_+$ given in \autoref{fig:U+U-} acts on $O(s)$ qubits, consists of $2^{O(s)}$ gates in the gate set $G$, and satisfies
\[
U_+\ket{0}_{\reg{R}}\ket{0}_{\reg{C}}\ket{0}_{\reg{D}}
\ket{0}_{\reg{K}}=\left[
\gamma
\abs{\alpha_T}^2
\cdots
\abs{\alpha_0}^2
\ket{0}_{\reg{R},\reg{C}}
\ket{0}_{\reg{D}}
+\sum_{j\geq1}
{\ket{\phi_T(j)}_{\reg{R},\reg{C}}\ket{j}_{\reg{D}}}
\right]
\ket{0}_{\reg{K}}\]
for certain unnormalized quantum states $\left|\phi_T\left(j\right)\right\rangle$ on registers $(\reg{R},\reg{C})$ for each $j\geq1$, 
where $\gamma=\abs{\gamma_T}^2\cdots\abs{\gamma_0}^2$.
Moreover, for every $r\in \Natural\cap2^{O(s)}$, 
$(U_+)^r$ acts on $O(s)$ qubits, consists of $2^{O(s)}$ gates in the gate set $G$, and satisfies
\[
(U_+)^r\ket{0}_{\reg{R}}\ket{0}_{\reg{C}}\ket{0}_{\reg{D}}
\ket{0}_{\reg{K}}=\left[
\left(
\gamma
\abs{\alpha_T}^2
\cdots
\abs{\alpha_0}^2
\right)^r
\ket{0}_{\reg{R},\reg{C}}
\ket{0}_{\reg{D}}
+\sum_{j\geq1}
{\ket{\phi^{(r)}_T(j)}_{\reg{R},\reg{C}}\ket{j}_{\reg{D}}}
\right]
\ket{0}_{\reg{K}},
\]
where $\ket{\phi_T^{(r)}(j)}$ is a certain unnormalized quantum state on registers $(\reg{R},\reg{C})$ for each $j\geq1$.  
\end{lemma}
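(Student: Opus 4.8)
The plan is to follow, pass by pass of the outer loop, the component of the global state still supported on $\reg{D}=\ket{0^N}$ (the ``$\reg{D}$-zero'' subspace), exploiting that $\reg{D}$ is only ever incremented, so that amplitude leaving this subspace never returns as long as $2^N$ exceeds the total number of increments performed. Fix a pass in which $\reg{K}$ holds $\ket{k}$ and suppose the $\reg{D}$-zero part entering it is $c\,\ket{0^\ell}_{\reg{R}}\ket{0^N}_{\reg{C}}\ket{0^N}_{\reg{D}}\ket{k}_{\reg{K}}$ for some amplitude $c$. By \eqref{eq:Vx}, Step~1 turns this into $c\,(\ket{good_k}+\ket{bad_k})\ket{0^N}_{\reg{D}}\ket{k}_{\reg{K}}$, where $\ket{good_k}\equiv\gamma_k\ket{\Psi_k}_{\reg{R}_1}\ket{0^{\ell-1}}_{\reg{R}_2}\ket{0^N}_{\reg{C}}$ sits on $\reg{C}=\ket{0^N}$ and $\ket{bad_k}$ sits on $\reg{C}\neq\ket{0^N}$. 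Writing $\ket{\Psi_k}=\alpha_k\ket{+}+\beta_k\ket{-}$ as in \eqref{eq:Psik}, Step~2 increments $\reg{D}$ everywhere except on the part with $\reg{C}=\ket{0^N}$ and $\reg{R}_1=\ket{+}$; so all of $\ket{bad_k}$ is evicted from the $\reg{D}$-zero subspace, only the $\ket{+}$-component of $\ket{good_k}$ survives, and the $\reg{D}$-zero part becomes $c\,\gamma_k\alpha_k\,\ket{+}_{\reg{R}_1}\ket{0^{\ell-1}}_{\reg{R}_2}\ket{0^N}_{\reg{C}}\ket{0^N}_{\reg{D}}\ket{k}_{\reg{K}}$.

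The crux is Step~3, which applies $V_x^\dag$ to this vector (its condition $\reg{D}=\ket{0^N}$ is met). Since $V_x$ is controlled by $\reg{K}$ (it replaces each postselection of the $\reg{K}$-controlled circuit $Q_x$ of \autoref{lm:SimbyPost} by a coherent $\reg{C}$-increment), $V_x^\dag$ leaves $\reg{K}=\ket{k}$, and the amplitude it puts on $\ket{0^\ell}_{\reg{R}}\ket{0^N}_{\reg{C}}\ket{k}_{\reg{K}}$ equals $\overline{\bra{+}_{\reg{R}_1}\bra{0^{\ell-1}}_{\reg{R}_2}\bra{0^N}_{\reg{C}}(\ket{good_k}+\ket{bad_k})}=\overline{\gamma_k\alpha_k}$, since $\bra{0^N}_{\reg{C}}$ annihilates $\ket{bad_k}$. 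Hence after Step~3 the $\reg{D}$-zero part is $c\,|\gamma_k|^2|\alpha_k|^2\,\ket{0^\ell}_{\reg{R}}\ket{0^N}_{\reg{C}}\ket{0^N}_{\reg{D}}\ket{k}_{\reg{K}}$ plus a vector whose $(\reg{R},\reg{C})$-part is orthogonal to $\ket{0^\ell}\ket{0^N}$; Step~4 evicts that remainder into $\reg{D}\neq0$, and Step~5 replaces $\ket{k}_{\reg{K}}$ by $\ket{k+1}_{\reg{K}}$. So a pass with argument $k$ multiplies the amplitude of the all-zero state in the $\reg{D}$-zero subspace by $|\gamma_k|^2|\alpha_k|^2$ and advances $\reg{K}$ by one.

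Composing the $T+1$ passes $k=0,\dots,T$ from the input $\ket{0}_{\reg{R}}\ket{0}_{\reg{C}}\ket{0}_{\reg{D}}\ket{0}_{\reg{K}}$ multiplies the all-zero amplitude by $\prod_{k=0}^{T}|\gamma_k|^2|\alpha_k|^2=\gamma\,|\alpha_T|^2\cdots|\alpha_0|^2$ (write $\lambda$ for this product); because $\reg{K}$ is incremented exactly once per pass while $V_x,V_x^\dag$ fix $\reg{K}$, \emph{every} branch --- including those that fell into $\reg{D}\neq0$ --- ends with $\reg{K}=\ket{0}$, and the closing $(T+1)$-fold $\INCREMENT_{T+1}^\dag$, the identity on the reachable subspace, preserves that. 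This yields the displayed form of $U_+\ket{0}_{\reg{R}}\ket{0}_{\reg{C}}\ket{0}_{\reg{D}}\ket{0}_{\reg{K}}$, with the $\reg{D}\neq0$ remainder written as $\sum_{j\geq1}\ket{\phi_T(j)}_{\reg{R},\reg{C}}\ket{j}_{\reg{D}}$. For $(U_+)^r$ I would induct on $r$: picking $N\in O(s)$ with $2^N>2(T+1)r$ (the maximal number of $\reg{D}$-increments over $(U_+)^r$), the counter $\reg{D}$ is monotone nondecreasing throughout $(U_+)^r$, so $U_+$ maps the $\reg{D}\neq0$ subspace into itself, and it maps $\{\reg{K}=\ket{0}\}$ into itself; hence applying $U_+$ to $\lambda^{r-1}\ket{0}_{\reg{R},\reg{C}}\ket{0}_{\reg{D}}\ket{0}_{\reg{K}}+(\text{terms with }\reg{D}\neq0,\ \reg{K}=\ket{0})$ gives $\lambda^{r}\ket{0}_{\reg{R},\reg{C}}\ket{0}_{\reg{D}}\ket{0}_{\reg{K}}+(\text{terms with }\reg{D}\neq0,\ \reg{K}=\ket{0})$. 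The resource count is routine: one pass uses one $V_x$, one $V_x^\dag$ conditioned on $\reg{D}=\ket{0^N}$ (built by attaching the $N$ control qubits to each of its gates, which by \autoref{cl:Cont(g)} and \autoref{cl:ContINC} costs only a constant factor per gate), a constant number of $\land_{O(s)}$- and $\lor_{O(s)}$-controlled $\INCREMENT$ gates ($2^{O(s)}$ gates of $G$ by \autoref{cl:ContINC}), and two Hadamards exposing the $\{\ket{+},\ket{-}\}$ basis; over $T+1\in 2^{O(s)}$ passes, and then $r\in 2^{O(s)}$ repetitions, this stays $2^{O(s)}$ gates of $G$ on $O(s)$ qubits (those of $\reg{R},\reg{C},\reg{D},\reg{K}$ plus $O(1)$ reusable ancillas).

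I expect the main obstacle to be the Step~3 bookkeeping: one must verify that $V_x^\dag$ applied to the partially projected vector $\gamma_k\alpha_k\ket{+}_{\reg{R}_1}\ket{0^{\ell-1}}_{\reg{R}_2}\ket{0^N}_{\reg{C}}\ket{k}_{\reg{K}}$ returns \emph{precisely} amplitude $|\gamma_k|^2|\alpha_k|^2$ to the all-zero state. This rests on two structural features of \eqref{eq:Vx}: that $V_x$ is block-diagonal with respect to $\reg{K}$, so the remainder stays in the $\reg{K}=\ket{k}$ sector where Step~4 can dispose of it cleanly; and that the ``good'' and ``bad'' branches occupy orthogonal $\reg{C}$-sectors, so the cross term $\langle bad_k|\,\cdot\,|good_k\rangle$ restricted to $\reg{R}_1=\ket{+}$ vanishes. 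Everything else is direct substitution into Eqs.~\eqref{eq:Vx}--\eqref{eq:Psik} combined with the monotonicity of $\reg{D}$.
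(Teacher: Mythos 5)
Your proposal is correct and follows essentially the same route as the paper's proof: a per-pass analysis of the component supported on $\reg{D}=\ket{0^N}$, the inner-product computation $\bra{+}_{\reg{R}1}\bra{0^{\ell-1}}_{\reg{R}2}\bra{0^N}_{\reg{C}}\bra{k}_{\reg{K}}V_x\ket{0^\ell}\ket{0^N}\ket{k}=\gamma_k\alpha_k$ to extract the factor $\abs{\gamma_k\alpha_k}^2$ at Step~3, monotonicity of the counter $D$ to prevent interference, and induction over the $T+1$ passes and the $r$ repetitions. Your explicit remarks that $V_x$ is block-diagonal in $\reg{K}$ and that $N$ must be chosen so $2^N$ exceeds the total number of $\reg{D}$-increments across $(U_+)^r$ are points the paper leaves implicit, but they do not change the argument.
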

\begin{proof}
We first give the analysis of the first repetition on registers $(\reg{R},\reg{C},\reg{D},\reg{K})$ initialized to the all-zero state. Step 1 applies $V_x$ to 
$\ket{0^\ell}_{\reg{R}}\ket{0^N}_{\reg{C}}\ket{k}_{\reg{K}}$ with $k=0$. By \autoref{eq:Vx}, the resulting state in $(\reg{R},\reg{C})$ is 
\[
\gamma_k(\alpha_k\ket{+}+\beta_k\ket{-})_{\reg{R}1}\ket{0^{\ell-1}}_{\reg{R}2}
\ket{0}_{\reg{C}}+\sqrt{1-\norm{\gamma_k\ket{\Psi_k}}^2}
\sum_{j\geq1}{
\ket{bad_k(j)}_{\reg{R}}\ket{j}_{\reg{C}}
},
\]
where we omit the register $\reg{K}$ for simplicity. Then, Step 2 appends register $\reg{D}$ and increments the counter $D$ if the content of the register $\reg{C}$ is not zero  or  the register $\reg{R}_1$ is in the state $\left|-\right\rangle.$ Thus, the resulting state is
\begin{multline*}
\mapsto \gamma_k\alpha_k\ket{+}_{\reg{R}1}\ket{0^{\ell-1}}_{\reg{R}2}\ket{0}_{\reg{C}}
\ket{0}_{\reg{D}}\\
\quad+\left(
\gamma_k\beta_k\ket{-}_{\reg{R}1}\ket{0^{\ell-1}}_{\reg{R}2}\ket{0}_{\reg{C}}
+\sqrt{1-\norm{\gamma_k\ket{\Psi_k}}^2}\sum_{j\geq1}
{
\ket{bad_k(j)}_{\reg{R}}
\ket{j}_{\reg{C}}
}
\right)
\ket{1}_{\reg{D}}\\
=\gamma_k\alpha_k\ket{+}_{\reg{R}1}\ket{0^{\ell-1}}_{\reg{R}2}
\ket{0}_{\reg{C}}\ket{0}_{\reg{D}}+
\ket{\phi}_{\reg{R},\reg{C}}\ket{1}_{\reg{D}},
\end{multline*}
where $\ket{\phi}_{\reg{R},\reg{C}}=\gamma_k\beta_k\ket{-}_{\reg{R}1}
\ket{0^{\ell-1}}_{\reg{R}2}\ket{0}_{\reg{C}}+
\sqrt{1-\norm{\gamma_k\ket{\Psi_k}}^2}\sum_{j\geq1}{
\ket{bad_k(j)}_{\reg{R}}
\ket{j}_{\reg{C}}}$. 
Step 3 then inverts the Step1 on $(\reg{R},\reg{C},\reg{K})$, i.e, applies $V_x^\dag$, if the content of $\reg{D}$ is zero. 
Since 
${\bra{+}_{\reg{R}1}
\bra{0^{\ell-1}}_{\reg{R}2}
\bra{0}_{\reg{C}}
\bra{k}_{\reg{K}}}
V_x
{\ket{0^\ell}_{\reg{R}}\ket{0^N}_{\reg{C}}\ket{k}_{\reg{K}}
}
=\gamma_k\alpha_k,
$
the resulting state is
\[
\mapsto
\gamma_k\alpha_k\left(
(\gamma_k\alpha_k)^*
\ket{0}_{\reg{R}}
\ket{0}_{\reg{C}}
+\sqrt{1-\abs{\gamma_k\alpha_k}^2}
\ket{0^\perp}_{\reg{R},\reg{C}}
\right)
\ket{0}_{\reg{D}}+\ket{\phi}_{\reg{R},\reg{C}}
\ket{1}_{\reg{D}},
\]
where $\ket{0^\perp}$ is a certain state orthogonal to the all-zero state.
Step 4  then increments the counter ${D}$ if the content of $(\reg{R},\reg{C})$ is not all-zero; we have 
\[
\mapsto \abs{\gamma_k\alpha_k}^2
\ket{0}_{\reg{R}}
\ket{0}_{\reg{C}}
\ket{0}_{\reg{D}}
+\sum_{j=1}^{2}{\ket{\phi_k(j)}_{\reg{R},\reg{C}}
\ket{j}_{\reg{D}}},
\]
for certain states $\ket{\phi_k(j)}$ for $j=1,2$.
Step 5 increments the content in register $\reg{K}$ to get the state on $\left(\reg{R},\reg{C},\reg{D},\reg{K}\right)$:
\[
\left[\left|\gamma_k\alpha_k\right|^2\left|0\right\rangle_\reg{R}\left|0\right\rangle_\reg{C}\left|0\right\rangle_\reg{D}+\sum_{j=1}^{2}{\left|\phi_k(j)\right\rangle_{\reg{R},\reg{C}}\left|j\right\rangle_\reg{D}}\right]\left|k+1\right\rangle_\reg{K}.
\]
Then, we repeat the same procedure.  A simple induction on $k$ shows that the final state after applying $U_+$ to 
$\ket{0^\ell}_{\reg{R}}\ket{0^N}_{\reg{C}}\ket{0^N}_{\reg{D}}
\ket{0}_{\reg{K}}$ is 
\[
U_+
\ket{0^\ell}_{\reg{R}}\ket{0^N}_{\reg{C}}\ket{0^N}_{\reg{D}}\ket{0}_{\reg{K}}
=\left[
\gamma\abs{\alpha_T}^2\cdots\abs{\alpha_0}^2
\ket{0}_{\reg{R},\reg{C}}\ket{0}_{\reg{D}}
+\sum_{j\geq1}{
\ket{\phi_T(j)}_{\reg{R},\reg{C}}
\ket{j}_{\reg{D}}
}
\right]
\ket{0}_{\reg{K}},\]
where $\gamma=\left|\gamma_T\right|^2\cdots\left|\gamma_0\right|^2$. Thus, if we repeat $U_+$ $r$ times, the resulting state is
\[
(U_+)^r\ket{0^\ell}_{\reg{R}}\ket{0^N}_{\reg{C}}\ket{0^N}_{\reg{D}}\ket{0}_{\reg{K}}
=\left[
\left(
\gamma\abs{\alpha_T}^2\cdots\abs{\alpha_0}^2
\right)^r
\ket{0}_{\reg{R},\reg{C}}\ket{0}_{\reg{D}}
+\sum_{j\geq1}{
\ket{\phi^{(r)}_T(j)}_{\reg{R},\reg{C}}
\ket{j}_{\reg{D}}
}
\right]
\ket{0}_{\reg{K}},
\]
for some unnormalized states $\ket{\phi^{(r)}_T(j)}$.

Next, we consider the space and gate complexities of $U_+$ (see Appendix for a rigorous analysis).
Recall that $V_x$ can be implemented with $2^{O(s)}$ gates in $G$ by using $O(1)$ reusable ancilla qubits.
One can show 
by using Claims~\ref{cl:Cont(g)} and \ref{cl:ContINC}
that every other step in $U_+$ can also be implemented  with $2^{O(s)}$ gates in $G$ by using $O(1)$ reusable ancilla qubits.
Consequently, $U_+$ can be implemented with $2^{O(s)}$ gates in $G$  and requires $O(1)$ ancilla qubits in addition to 
$(\reg{R},\reg{C},\reg{D},\reg{K})$, which is $O(s)$ qubits in total. Since the ancilla qubits are reusable, 
$(U_+)^r$ is also implementable with $r\cdot 2^{O(s)}=2^{O(s)}$ gates in $G$ and acts on $O(s)$ qubits.
%
\end{proof}
We can prove the following lemma for $U_-$ in almost the same way.
\begin{lemma}\label{lm:U-}
For every $k\in \{0,\dots,T\}$, let $\beta_k$ and $\gamma_k$ be the coefficients appearing in Eqs.~(\ref{eq:Vx}) and (\ref{eq:Psik}).
Then, $U_-$ given in \autoref{fig:U+U-} acts on $O(s)$ qubits, consists of $2^{O(s)}$ gates in the gate set $G$, and satisfies 
\[
U_-\ket{0}_{\reg{R}}\ket{0}_{\reg{C}}\ket{0}_{\reg{D}}
\ket{0}_{\reg{K}}=\left[
\gamma
\abs{\beta_T}^2
\cdots
\abs{\beta_0}^2
\ket{0}_{\reg{R},\reg{C}}
\ket{0}_{\reg{D}}
+\sum_{j\geq1}
{\ket{\psi_T(j)}_{\reg{R},\reg{C}}\ket{j}_{\reg{D}}}
\right]
\ket{0}_{\reg{K}}\]
for certain unnormalized quantum states $\left|\psi_T\left(j\right)\right\rangle$ on registers $(\reg{R},\reg{C})$ for each $j\geq1$,
where $\gamma=\abs{\gamma_T}^2\cdots\abs{\gamma_0}^2$.
Moreover, for every $r\in \Natural\cap 2^{O(s)}$, 
$(U_-)^r$ acts on $O(s)$ qubits, consists of $2^{O(s)}$ gates in the gate set $G$, and satisfies 
\[
(U_-)^r\ket{0}_{\reg{R}}\ket{0}_{\reg{C}}\ket{0}_{\reg{D}}
\ket{0}_{\reg{K}}=\left[
\left(
\gamma
\abs{\beta_T}^2
\cdots
\abs{\beta_0}^2
\right)^r
\ket{0}_{\reg{R},\reg{C}}
\ket{0}_{\reg{D}}
+\sum_{j\geq1}
{\ket{\psi^{(r)}_T(j)}_{\reg{R},\reg{C}}\ket{j}_{\reg{D}}}
\right]
\ket{0}_{\reg{K}},
\]
where $\ket{\psi_T^{(r)}(j)}$ is a certain unnormalized quantum state on registers $(\reg{R},\reg{C})$ for each $j\geq1$.  
\end{lemma}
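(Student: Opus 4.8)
The plan is to imitate the proof of \autoref{lm:U+} line by line, swapping the roles of $\ket{+}$ and $\ket{-}$. First I would trace a single repetition of $U_-$ applied to $(\reg{R},\reg{C},\reg{D},\reg{K})$ initialized to the all-zero state with $\reg{K}$ holding the value $k$. Step~1 applies $V_x$, so by \autoref{eq:Vx} the state on $(\reg{R},\reg{C})$ becomes $\gamma_k(\alpha_k\ket{+}+\beta_k\ket{-})_{\reg{R}1}\ket{0^{\ell-1}}_{\reg{R}2}\ket{0}_{\reg{C}}$ plus a ``bad'' part supported on $\reg{C}$ being non-zero. The modified Step~2 of $U_-$ increments $\reg{D}$ whenever $\reg{C}$ is non-zero \emph{or} $\reg{R}_1$ is in state $\ket{+}$, so the unique component on which $\reg{D}$ stays at zero is $\gamma_k\beta_k\ket{-}_{\reg{R}1}\ket{0^{\ell-1}}_{\reg{R}2}\ket{0}_{\reg{C}}\ket{0}_{\reg{D}}$, while every other component carries $\reg{D}\neq 0$.

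Next, Step~3 applies $V_x^\dag$ conditioned on $\reg{D}$ being zero; using the analogue of the inner-product identity from the $U_+$ proof, namely $\bra{-}_{\reg{R}1}\bra{0^{\ell-1}}_{\reg{R}2}\bra{0}_{\reg{C}}\bra{k}_{\reg{K}}\,V_x\,\ket{0^\ell}_{\reg{R}}\ket{0^N}_{\reg{C}}\ket{k}_{\reg{K}}=\gamma_k\beta_k$, the $\reg{D}=0$ branch becomes $\abs{\gamma_k\beta_k}^2\ket{0}_{\reg{R},\reg{C}}$ plus a component orthogonal to the all-zero state of $(\reg{R},\reg{C})$. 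Step~4 then increments $\reg{D}$ on every non-all-zero component of $(\reg{R},\reg{C})$, isolating the term $\abs{\gamma_k\beta_k}^2\ket{0}_{\reg{R},\reg{C}}\ket{0}_{\reg{D}}$, and Step~5 advances $\reg{K}$ to $k+1$. Induction on $k$ over the $T+1$ repetitions, followed by the reinitialization of $\reg{K}$ by $T+1$ applications of $\INCREMENT^\dag_{T+1}$, then yields the claimed formula for $U_-\ket{0}_{\reg{R}}\ket{0}_{\reg{C}}\ket{0}_{\reg{D}}\ket{0}_{\reg{K}}$ with all-zero amplitude $\gamma\abs{\beta_T}^2\cdots\abs{\beta_0}^2$, where $\gamma=\abs{\gamma_T}^2\cdots\abs{\gamma_0}^2$. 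For $(U_-)^r$ I would observe that every step of $U_-$ only \emph{increments} the counter $\reg{D}$ and that, by choosing $N\in O(s)$ large enough that $\reg{D}$ does not wrap around over all $2^{O(s)}$ increments, the subspace on which $\reg{D}\neq 0$ is invariant under $U_-$; since $U_-$ also returns $\reg{K}$ to the all-zero state, a further application of $U_-$ acts on the all-zero component exactly as in the base case and keeps the $\reg{D}\neq 0$ part inside the $\reg{D}\neq 0$ subspace, so induction on $r$ gives the stated expression with amplitude $(\gamma\abs{\beta_T}^2\cdots\abs{\beta_0}^2)^r$.

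For the resource bounds I would copy the argument used for $U_+$: $V_x$ costs $2^{O(s)}$ gates of $G$ with $O(1)$ reusable ancillas, and the conditional increments in Steps~2 and 4 as well as the increments of $\reg{K}$ in Step~5 (and its reinitialization) are realized via Claims~\ref{cl:Cont(g)} and \ref{cl:ContINC}, each costing $2^{O(s)}$ gates of $G$ and $O(1)$ reusable ancillas, where the test ``$\reg{R}_1$ is in state $\ket{+}$'' is implemented by conjugating a controlled $\INCREMENT_{2^N}$ with a Hadamard on $\reg{R}_1$. Summing over the $T+1\in 2^{O(s)}$ repetitions gives $2^{O(s)}$ gates of $G$ on $O(s)$ qubits, and reusing the ancillas across the $r\in 2^{O(s)}$ copies gives the same bound for $(U_-)^r$. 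I expect no genuine obstacle here: the whole content is that $U_-$ is the $\ket{+}\leftrightarrow\ket{-}$ mirror image of $U_+$, so the only point that needs a moment's care is checking that replacing the implicit ``$\ket{-}$-control'' inside Step~2 of $U_+$ by a ``$\ket{+}$-control'' in $U_-$, together with the correspondingly swapped inner product in Step~3, changes nothing about implementability or the amplitude bookkeeping — which it does not.
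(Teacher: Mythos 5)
Your proposal is correct and matches the paper's intent exactly: the paper proves \autoref{lm:U-} by stating that it follows ``in almost the same way'' as \autoref{lm:U+}, which is precisely the $\ket{+}\leftrightarrow\ket{-}$ mirror argument you carry out, including the swapped inner-product identity $\bra{-}_{\reg{R}1}\bra{0^{\ell-1}}_{\reg{R}2}\bra{0}_{\reg{C}}\bra{k}_{\reg{K}}V_x\ket{0^\ell}_{\reg{R}}\ket{0^N}_{\reg{C}}\ket{k}_{\reg{K}}=\gamma_k\beta_k$ and the identical resource accounting. Your added observation that the $\reg{D}\neq 0$ subspace is invariant under further applications of $U_-$ (so the induction on $r$ goes through) is the same bookkeeping the paper uses implicitly for $(U_+)^r$, so there is nothing to add.
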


\begin{figure}[t]
\hrule\vspace{3mm}
\begin{center}
\textsc{Unitary Quantum Circuit with Postselection for $\PrSPACE(s)$ problems}
\end{center}
Initialize registers $(\reg{W},\reg{R},\reg{C},\reg{D},\reg{K})$ to the all-zero state.
\begin{enumerate}
\item Apply the Hadamard gate $\Hadamard$ to register $\reg{W}$.
\item If the content of $\reg{W}$ is 0, then apply $\left(U^+\right)^r$ to registers $(\reg{R},\reg{C},\reg{D},\reg{K})$; otherwise, apply $(U^-)^r$ to registers $(\reg{R},\reg{C},\reg{D},\reg{K})$.
\item Postselect the all-zero state on register $\reg{D}$.
\item Measure register $\reg{W}$ in the basis $\{\ket{0},\ket{1}\}$. If the outcome 0, then accept ($p_a>1/2$); otherwise reject (i.e., $p_a<1/2$).
\end{enumerate}
\vspace{3mm}\hrule
\caption{Unitary Quantum Circuit with Postselection for $\PrSPACE(s)$ problems}
\label{fig:finalalgo}
\end{figure}
\subsection{Final Unitary Circuit}
\autoref{fig:finalalgo} shows a unitary quantum circuit with postselection acting on 
five registers $(\reg{W},\reg{R},\reg{C},\reg{D},\reg{K})$, where $\reg{W}$ is a single-qubit register.
Now, we finalize the proof of \autoref{lm:lowerbound} with this circuit.

For simplicity, we first assume that $r=1$. It is straightforward to extend the proof to the general $r$. 
It follows from Lemmas~\ref{lm:U+} and \ref{lm:U-} that after Step 2, the state in the register 
$(\reg{W},\reg{R},\reg{C},\reg{D},\reg{K})$) is
\begin{multline}
\frac{1}{\sqrt2}\ket{0}_{\reg{W}}
\left[
\gamma
\abs{\alpha_T}^2
\cdots
\abs{\alpha_0}^2
\ket{0}_{\reg{R},\reg{C}}
\ket{0}_{\reg{D}}
+\sum_{j\geq1}
{\ket{\phi_T(j)}_{\reg{R},\reg{C}}\ket{j}_{\reg{D}}}
\right]
\left|0\right\rangle_{\reg{K}}\\
+\frac{1}{\sqrt2}\ket{1}_{\reg{W}}
\left[
\gamma
\abs{\beta_T}^2
\cdots
\abs{\beta_0}^2
\ket{0}_{\reg{R},\reg{C}}
\ket{0}_{\reg{D}}
+\sum_{j\geq1}
{\ket{\psi_T(j)}_{\reg{R},\reg{C}}\ket{j}_{\reg{D}}}
\right]
\left|0\right\rangle_{\reg{K}}.
\end{multline}
Step 3  postselects the all-zero state in register $\reg{D}$. Thus, the resulting state in the register $(\reg{W},\reg{R},\reg{C})$ is
\begin{equation}\label{eq:postselectedstate}
\gamma^\prime
\left(
\abs{\alpha_T}^2
\cdots
\abs{\alpha_0}^2
\ket{0}_{\reg{W}}
+
\abs{\beta_T}^2
\cdots
\abs{\beta_0}^2
\ket{1}_{\reg{W}}
\right)
\left|0\right\rangle_{\reg{R},\reg{C}}
\end{equation}
where $\gamma^\prime$ is the normalizing factor.
Here, the probability of measuring postselecting state is 
\begin{equation*}
p_{post}=\frac{1}{2}\left(\left(
\gamma
\abs{\alpha_T}^2
\cdots
\abs{\alpha_0}^2
\right)^2 
+\left(
\gamma
\abs{\beta_T}^2
\cdots
\abs{\beta_0}^2
\right)^2
\right).
\end{equation*}
Recall that $\alpha_k=\braket{+}{\Psi_k}$ and $\beta_k=\braket{-}{\Psi_k}$, where $\ket{\Psi_k}$ is defined as \autoref{eq:Psik}.
If $p_a>1/2$,  then $\beta_k>0$ for all $k$. If $p_a<1/2$, then $\alpha_k>0$ for all $k$. Since $\gamma\neq0$, $p_{post}$ is strictly positive in both cases.

If $p_a<1/2$, then $\abs{\alpha_k}^2>\abs{\beta_k}^2\geq 0$ for all $k$, and $\left|\alpha_k\right|^2>\left(1+\delta\right)\left|\beta_k\right|^2$ for some $k$ and some constant $\delta$, say, $16/9$.  Since
$\frac{\left|\beta_T\right|^2\cdots\left|\beta_0\right|^2}{\left|\alpha_T\right|^2\cdots\left|\alpha_0\right|^2}<\frac{1}{1+\delta}=\frac{9}{25}$,
the probability that $\left|0\right\rangle_{\reg{W}}$ is measured in Step 4,
that is, the probability of obtaining the outcome 0 when measuring register $\reg{W}$ in~\autoref{eq:postselectedstate} in the basis $\{ \ket{0},\ket{1}\}$,
is
\begin{equation*}
\frac{
\abs{\alpha_T}^4
\cdots
\abs{\alpha_0}^4
}{
\abs{\alpha_T}^4
\cdots
\abs{\alpha_0}^4
+
\abs{\beta_T}^4
\cdots
\abs{\beta_0}^4
}=\frac{1}{
1+\left(
\abs{\beta_T}^4
\cdots
\abs{\beta_0}^4
\right)/\left(
\abs{\alpha_T}^4
\cdots
\abs{\alpha_0}^4
\right)
}>\frac{1}{1+(9/25)^2}=\frac{625}{706}.
\end{equation*}
If $p_a>1/2$, then $0\le \abs{\alpha_k}^2<\abs{\beta_k}^2$ for all $k$, and $(1+\delta)\abs{\alpha_k}^2<\abs{\beta_k}^2$ for some $k$ and a constant $\delta=16/9$. Thus, 
the probability that $\ket{1}_{\reg{W}}$ is measured in Step 4
is at least $\frac{625}{706}$ in the same analysis as in the case of $p_a<1/2$.

For a general $r\in 2^{O(s)}$, the state in the register $(\reg{W},\reg{R},\reg{C})$ after Step 3 is
\begin{equation*}
\gamma^{\prime\prime}
\left(
\left(
\abs{\alpha_T}^2
\cdots
\abs{\alpha_0}^2
\right)^r
\ket{0}_{\reg{W}}
+
\left(
\abs{\beta_T}^2
\cdots
\abs{\beta_0}^2
\right)^r
\ket{1}_{\reg{W}}
\right)
\left|0\right\rangle_{\reg{R},\reg{C}},
\end{equation*}
where $\gamma''$ is the normalizing factor.
For $p_a<1/2$, thus, the probability that $\ket{0}_{\reg{W}}$ is measured in Step 4 is
\begin{equation*}
\frac{
\Oset{
\abs{\alpha_T}^4
\cdots
\abs{\alpha_0}^4
}^r
}{
\Oset{
\abs{\alpha_T}^4
\cdots
\abs{\alpha_0}^4
}^r
+
\Oset{
\abs{\beta_T}^4
\cdots
\abs{\beta_0}^4
}^r
}
>\frac{1}{1+(9/25)^{2r}}>1-\frac{81^r}{625^r+81^r}>1-\frac{1}{2^r}.
\end{equation*}
Similarly, for $p_a>1/2$, the probability that $\ket{1}_{\reg{W}}$ is measured in Step 4 is at least $1-\frac{1}{2^r}$.

Finally, we consider the space and gate complexities. 
The quantum circuit in \autoref{fig:finalalgo} acts on a single-qubit register $\reg{W}$ in addition to $(\reg{R},\reg{C},\reg{D},\reg{K})$. 
In this circuit, 
every gate $g\in G$ used in $(U_+)^r$ and $(U_-)^r$ is replaced with  $\land(g)$, which can be implemented with $O(1)$ gates in $G$  with $O(1)$ resuable ancilla qubits
by \autoref{cl:Cont(g)}.
Since Step 2 is dominant, and $(U_+)^r$ and $(U_-)^r$ use $O(s)$ qubits and $2^{O(s)}$ gates in $G$ by  Lemmas~\ref{lm:U+} and \ref{lm:U-},
the entire circuit uses $O(s)$ qubits and $ 2^{O(s)}$ gates in $G$.

\section{Application to One-Clean Qubit Model}
As introduced in \autoref{sec:intro},
$\DQC1$ is a model of quantum computing such that the input state is  completely mixed except for one qubit, which is initialized to $\ket{0}$.  
\begin{definition}[$\PostDQCSPACE$]
Let $s$ be any space-constructible function with $s(n)\in \Omega(\log n)$,
and let $c,d$ be $s$-space computable functions.
$\PostDQCSPACE(s)[c,d]$ is the class of promised problems $L=(L_Y,L_N)$  for which there exists  
an $s$-space uniform family of unitary quantum circuits $\{U_x: x\in L\}$ consisting of $2^{O(s)}$ elementary gates on $m+1$ qubits for $m\in O(s)$
such that,  when applying $U_x$ to the $m+1$ qubits in state $\ketbra{0}\otimes (\Identity/2)^{\otimes m}$,
followed by postselections and measurements,
(1) the probability $p_{post}$ of measuring $\ket{1}$ on all postselection qubits is strictly positive;
(2) for $x\in L_Y$, conditioned on all the postselection qubits being $\ket{1}$, the probability that $U_x$ accepts is at least $c(|x|)$;
(3) for $x\in L_N$, conditioned on all the postselection qubits being $\ket{1}$, the probability that $U_x$ accepts is at most $d(|x|)$.
In particular, define $\PostBDQCSPACE(s)\equiv \PostDQCSPACE(s)[2/3,1/3]$, and $\PostBDQCL\equiv\PostBDQCSPACE(\log)$.
\end{definition}
In the above definition, we allow postselection to be made on multiple qubits, since it does not seem possible in general to aggregate multiple postselection qubits to a single qubit due to the lack of initialized qubits.
\autoref{th:DQC} follows from Theorems~\ref{th:PostQuSPACE} and \ref{th:DQCprecise}.

\begin{theorem} \label{th:DQCprecise}
For any space-constructible function $s$ with $s(n)\in \Omega(\log n)$
and any $s$-space computable functions $c$ and $d$ such that $c(n)>d(n)$ for sufficiently large $n\in \Natural$,
\[
\PostDQCSPACE(s)[c,d]=\PostQuSPACE(s)[c,d].
\]
In particular, $\PostBDQCSPACE(s)=\PostBQuSPACE(s)$ and $\PostBDQCL=\PostBQuL$.
\end{theorem}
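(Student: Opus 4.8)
The plan is to prove the two inclusions $\PostDQCSPACE(s)[c,d]\subseteq\PostQuSPACE(s)[c,d]$ and $\PostQuSPACE(s)[c,d]\subseteq\PostDQCSPACE(s)[c,d]$ separately; the first is the interesting direction and the second is essentially trivial.

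For $\PostQuSPACE(s)[c,d]\subseteq\PostDQCSPACE(s)[c,d]$, take a $\PostQuSPACE$ family $\{U_x\}$ acting on $m$ qubits with the all-zero input, and build a DQC1 circuit on $m+1$ qubits whose clean qubit will encode the original all-zero input. First I would append $m$ maximally mixed ancilla qubits; the DQC1 state is $\ketbra{0}{0}\otimes(\Identity/2)^{\otimes m}$. The idea is to use the first $m$ of these $m+1$ qubits as the working register after first coherently copying the clean qubit onto a fresh location via $\CNOT$'s and then postselecting the remaining $m$ mixed qubits onto $\ket{0^m}$ — but since we only have one clean qubit, a cleaner route is: run a $\CNOT$-fan from the clean qubit controlling $m$ ancilla qubits initialized to the mixed state, then postselect those $m$ ancillas onto $\ket{0^m}$; conditioned on that postselection the $\CNOT$-fan leaves a copy of $\ket{0}$ on each, and one recovers an effective all-zero register of the right size (after relabeling) on which $U_x$ and its final postselection/measurement are applied. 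Since postselection on multiple qubits is allowed in $\PostDQCSPACE$, this costs only $O(1)$ extra postselection qubits, $O(s)$ gates, and does not change $p_{post}>0$ (the new postselection probability is the old one times $2^{-m}$, still strictly positive) nor the conditional acceptance probability, so completeness $c$ and soundness $d$ are preserved.

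For the reverse inclusion $\PostDQCSPACE(s)[c,d]\subseteq\PostQuSPACE(s)[c,d]$, the standard trick is to simulate the completely mixed state by a pure state on twice as many qubits: replace each mixed qubit by half of a maximally entangled pair $(\ket{00}+\ket{11})/\sqrt2$, prepared from $\ket{00}$ by a Hadamard and a $\CNOT$, and feed one half of each pair into the original DQC1 circuit $U_x$ while leaving the other halves as purification qubits. Tracing out the purification qubits reproduces exactly $\ketbra{0}{0}\otimes(\Identity/2)^{\otimes m}$ on the computation register, so the acceptance statistics conditioned on postselection are identical. Then I would aggregate the (possibly multiple) postselection qubits of the DQC1 circuit into a single postselection qubit using $\land_k(\PauliX)$ and $O(1)$ ancilla qubits exactly as noted after \autoref{def:PostQuSPACE}, yielding a legitimate $\PostQuSPACE$ circuit with the same $c,d$ and strictly positive $p_{post}$. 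The whole construction uses $2m+1\in O(s)$ qubits plus $O(1)$ more, and $O(s)$ additional gates, so it stays within the space and time budgets and $s$-space uniformity is clearly preserved since the new circuit is obtained from the description of $U_x$ by a fixed local transformation.

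The ``in particular'' claims then follow by specializing $c=2/3$, $d=1/3$ and $s=\log$. The main obstacle — more a matter of care than of depth — is the first inclusion: one must make sure that a \emph{single} clean qubit genuinely suffices to manufacture an all-zero register of size $\Theta(s)$, which works precisely because postselection is a promise-style operation that lets us condition on the $\CNOT$-fan having copied $\ket{0}$ everywhere, and because $\PostDQCSPACE$ permits postselection on many qubits (this is why the definition was stated that way). One should also double-check that $p_{post}$ remaining strictly positive is genuinely guaranteed: in the forward direction the postselection onto $\ket{0^m}$ of the mixed ancillas has probability exactly $2^{-m}>0$ independently of the input, and it multiplies rather than gates the original $p_{post}$, so condition (1) of \autoref{def:PostDQCSPACE} is inherited without change.
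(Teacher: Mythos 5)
Your purification argument for $\PostDQCSPACE(s)[c,d]\subseteq\PostQuSPACE(s)[c,d]$ is exactly the paper's proof (\autoref{lm:upperDQC}) and is correct, as is the aggregation of postselection qubits. The gap is in the other inclusion, which you yourself identify as the crux. In $\PostDQCSPACE$ the circuit is a \emph{unitary} circuit applied to $\ket{0}\bra{0}\otimes(\Identity/2)^{\otimes m}$, \emph{followed} by postselections and measurements; no intermediate postselection is permitted. Your construction postselects the $m$ mixed qubits onto $\ket{0^m}$ and \emph{then} runs $U_x$ on them --- that is an intermediate postselection, precisely the operation the model is defined to exclude (the introduction of the paper even remarks that allowing it would let one project the mixed state to $\ket{0^m}$ and turn DQC1 into ordinary quantum computation, which is your move verbatim). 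Nor can you defer this postselection to the end in the usual way: deferring requires copying the to-be-postselected qubits onto fresh \emph{clean} ancillas, and DQC1 has only one clean qubit; a $\CNOT$ onto a maximally mixed target records nothing. Postselecting the $m$ working qubits on $\ket{0^m}$ \emph{after} $U_x$ projects onto the wrong state, since $U_x$ has already acted on them. Your $\CNOT$-fan does not rescue this: with the clean qubit as control in state $\ket{0}$ it is the identity, and with the dirty qubits as controls targeting the clean qubit it computes their parity, not the all-zero indicator.

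The paper's fix (\autoref{lm:lowerDQC}) uses the single clean qubit as a one-bit flag set \emph{before} $U_x$ runs: apply $\land_m(\PauliX)$ with its controls conjugated by $\PauliX$ gates, so the clean qubit is flipped to $\ket{1}$ exactly on the branch where the dirty qubits are $\ket{0^m}$; then apply $U_x$ to the $m$ dirty qubits; at the end postselect $\ket{1}$ on the clean qubit together with the original postselection qubit, and measure the output qubit. Because the completely mixed state is a classical mixture of computational-basis states, the flag faithfully records whether the initial dirty state was $0^m$, so the terminal postselection is equivalent to having conditioned at the start: $p_{post}$ is scaled by $2^{-m}>0$ and the conditional acceptance probabilities, hence $c$ and $d$, are unchanged. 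The gate $\land_m(\PauliX)$ is implementable with $O(m)$ $\CNOT$ and $\magicT$ gates using ancillas in unknown (e.g., completely mixed) states, so the space, time, and uniformity bounds survive. With this replacement the rest of your write-up, including the ``in particular'' specializations, goes through.
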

\autoref{th:DQCprecise} immediately follows from Lemmas~\ref{lm:upperDQC} and \ref{lm:lowerDQC}.
\begin{lemma} \label{lm:upperDQC}
$\PostDQCSPACE[c,d]\subseteq\PostQuSPACE[c,d]$. 
\end{lemma}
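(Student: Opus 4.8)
The plan is to use the standard purification trick: an $m$-qubit completely mixed state is the reduced state of a $2m$-qubit maximally entangled state, and the latter can be prepared unitarily from the all-zero state. Let $\{U_x\}$ be an $s$-space uniform family of DQC1 circuits witnessing $L\in\PostDQCSPACE(s)[c,d]$, acting on $m+1$ qubits --- a clean qubit together with an $m$-qubit register $\reg{M}$ carrying $(\Identity/2)^{\otimes m}$ --- followed by postselections on some $k$ qubits and a measurement of the output qubit. I would construct a unitary circuit $U'_x$ acting on the clean qubit, $\reg{M}$, a fresh $m$-qubit register $\reg{A}$, and $O(1)$ spare ancilla qubits, all initialized to $\ket{0}$, as follows: first apply $\Hadamard$ to every qubit of $\reg{A}$ and then a $\CNOT$ from the $i$-th qubit of $\reg{A}$ onto the $i$-th qubit of $\reg{M}$ for each $i\in[m]$, producing the state $\frac{1}{\sqrt{2^m}}\sum_{z\in\Sigma^m}\ket{z}_{\reg{M}}\ket{z}_{\reg{A}}$ on $(\reg{M},\reg{A})$; then run $U_x$ on the clean qubit together with $\reg{M}$, leaving $\reg{A}$ and the spare ancillas untouched.

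For correctness, note that $\mathrm{Tr}_{\reg{A}}$ of that maximally entangled state equals $(\Identity/2)^{\otimes m}$, and that $U_x$ does not act on $\reg{A}$; hence the reduced state of $U'_x$ applied to the all-zero state, restricted to the original $m+1$ qubits, is exactly $U_x\bigl(\ketbra{0}\otimes(\Identity/2)^{\otimes m}\bigr)U_x^\dagger$, which is precisely the state on which the DQC1 computation postselects and measures. Since the postselections and the final measurement touch only these $m+1$ qubits, the postselection probability $p_{post}$, the conditional acceptance probability, and therefore the completeness bound $c$ and soundness bound $d$ are all inherited unchanged from $\{U_x\}$. If $k\ge 2$, I would first aggregate the $k$ postselection qubits into one using $\land_k(\PauliX)$ together with the $O(1)$ spare ancilla, exactly as in the remark following \autoref{def:PostQuSPACE}; this changes neither $p_{post}$ nor the acceptance probability, and puts $U'_x$ into the single-postselection-qubit form required by \autoref{def:PostQuSPACE}.

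For the resource bookkeeping, $U'_x$ uses $2m+1+O(1)=O(s)$ qubits and $2^{O(s)}+2m+O(k)=2^{O(s)}$ elementary gates, and since $\{U_x\}$ is $s$-space uniform while the added $\Hadamard$, $\CNOT$ and aggregation gates admit a trivial uniform description, the family $\{U'_x\}$ is $s$-space uniform. Combining these observations yields $L\in\PostQuSPACE(s)[c,d]$, as required.

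I do not anticipate a genuine obstacle: the whole argument rests on the fact that purification merely doubles the qubit count, which is affordable in the $O(s)$-space regime. The one point requiring a little care is that $\reg{A}$ becomes entangled after the preparation step, so it cannot double as scratch space for aggregating the postselection qubits; reserving the $O(1)$ fresh ancilla from the outset handles this.
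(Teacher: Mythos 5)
Your proposal is correct and follows essentially the same route as the paper: purify the completely mixed state via maximally entangled pairs prepared with $\Hadamard$ and $\CNOT$ from the all-zero state (doubling the qubit count, which stays within $O(s)$), run $U_x$ on the original $m+1$ qubits, and aggregate the postselection qubits into one. Your explicit remarks that the purifying register must be left untouched and that a separate $O(1)$ clean ancilla is needed for the aggregation are fine points the paper glosses over, but the argument is the same.
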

\begin{proof}
For any promise problem $L\in \PostDQCSPACE[c,d]$, let $\{U_x\colon x\in L\} $ be the $s$-uniform family of unitary quantum circuits given in the definition of $\PostDQCSPACE[c,d]$. Recall that all qubits but one are dirty, i.e., initially in the completely mixed state.  This is only the difference between the two complexity classes (except for the possible number of postselection qubits).
Fortunately, the completely mixed state can be generated easily from the all-zero state as follows.
For each qubit in the completely mixed state,  first prepare the maximally entanglement state $\left(\left|00\right\rangle+\left|11\right\rangle\right)/\sqrt2$  by applying $\Hadamard\otimes \Identity$ and then applying $\CNOT$ to two qubits in the state $\ket{0}\otimes \ket{0}$, and then discarding one of the two qubits. This requires at most twice the original space. Hence, the space is still bounded by $O(s)$.  If there are multiple postselection qubits in $U_x$, then we can aggregate them into a single postselection qubit space-efficiently. 
By putting all things together, we obtain another unitary circuit $U'_x$ acting on $O(s)$ qubits, which are initialized to the all-zero state.
It is easy to see that the completeness and soundness are still $c$ and $d$, respectively.
Therefore, $L$ is in $\PostQuSPACE[c,d]$. 
\end{proof}
\begin{lemma}\label{lm:lowerDQC}
$\PostQuSPACE(s)[c,d]\subseteq \PostDQCSPACE(s)[c,d]$.
\end{lemma}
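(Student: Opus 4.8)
The plan is to show that any unitary quantum circuit with postselection operating on the all-zero initial state can be simulated by a DQC1 circuit (one clean qubit, the rest completely mixed) with postselection, without blowing up the space by more than a constant factor. The difficulty, compared with Lemma~\ref{lm:upperDQC}, is the reverse direction: we must \emph{create} $\ket{0^m}$ from the completely mixed state $(\Identity/2)^{\otimes m}$ using only the single clean qubit and postselection. The standard trick is to exploit postselection itself: given a dirty qubit in state $\Identity/2$, apply a $\CNOT$ controlled by that qubit targeting a fresh clean-qubit-derived ancilla, or more directly, postselect the dirty qubit onto $\ket{0}$. Concretely, for each of the $m$ dirty qubits, I would append a postselection that projects it onto $\ket{0}$; since $\Identity/2$ has non-zero overlap with $\ket{0}$, the postselection succeeds with probability $1/2$ per qubit, and the post-measurement state of those qubits is exactly $\ket{0^m}$.

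First I would take $L\in\PostQuSPACE(s)[c,d]$ with its $s$-space uniform family $\{U_x\}$ acting on $m\in O(s)$ qubits initialized to $\ket{0^m}$, followed by postselection on the first qubit and measurement on the output qubit. I construct a DQC1 circuit $U'_x$ on $m+1$ qubits: designate qubit $0$ as the single clean qubit (initialized to $\ket{0}$) and qubits $1,\dots,m$ as the dirty qubits in state $(\Identity/2)^{\otimes m}$. The circuit $U'_x$ first does nothing to the dirty qubits (they will be forced to $\ket{0^m}$ by postselection), then runs $U_x$ on qubits $1,\dots,m$ (using qubit $0$ as one extra clean qubit if a fresh ancilla is convenient, or simply ignoring it), and finally the postselection stage of $U'_x$ postselects qubits $1,\dots,m$ onto $\ket{0^m}$ \emph{as well as} carrying out the original postselection of $U_x$ on its first qubit. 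By the definition of $\PostDQCSPACE$, which explicitly permits postselection on multiple qubits, this is legal.

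The key step is the correctness analysis: I would verify that conditioning the global state $\ketbra{0}_0\otimes(\Identity/2)^{\otimes m}$ on the outcome ``qubits $1,\dots,m$ measured in $\ket{0^m}$'' yields exactly $\ket{0^m}$ on those qubits (with probability $2^{-m}>0$, so $p_{post}$ stays strictly positive), after which applying $U_x$ and its postselection reproduces precisely the input distribution of the original $\PostQuSPACE$ computation; hence completeness $c$ and soundness $d$ are preserved verbatim. One subtlety to handle: the postselections onto $\ket{0^m}$ must be done \emph{before} $U_x$ acts (or, equivalently, one argues the joint postselection commutes appropriately), so that $U_x$ genuinely sees $\ket{0^m}$; doing them after $U_x$ would not in general give the right state. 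The cleanest route is to make the dirty-qubit postselections the very first operations of $U'_x$. Finally I would note the resource bounds are immediate: $U'_x$ has $m+1\in O(s)$ qubits, $2^{O(s)}$ elementary gates (the same as $U_x$ plus $m$ trivial single-qubit postselection primitives), and $s$-space uniformity is inherited from that of $\{U_x\}$. I expect the main obstacle to be purely expository---getting the ordering of postselections right and confirming $p_{post}>0$---rather than any genuine technical hurdle.
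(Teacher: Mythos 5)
There is a genuine gap, and it is exactly the issue the paper flags in its introduction: in the model defining $\PostDQCSPACE(s)[c,d]$, the circuit is a \emph{unitary} circuit \emph{followed by} postselections and measurements --- no intermediate postselection is permitted. Your main route is to postselect each dirty qubit onto $\ket{0}$ as ``the very first operations of $U'_x$,'' before $U_x$ acts. That is an intermediate postselection and is not a legal operation in this class; indeed, the paper explicitly observes that if intermediate postselections were allowed, DQC1 would trivially collapse to ordinary quantum computation, which is precisely the shortcut you are taking. Nor can you rescue the argument by ``commuting the postselection to the end'': pushing the projector $\ketbra{0^m}$ through $U_x$ turns it into a projection onto the entangled state $U_x\ket{0^m}$, which is not expressible as single-qubit postselections at the end of the circuit. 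Your parenthetical alternative (a $\CNOT$ onto ``a fresh clean-qubit-derived ancilla'' per dirty qubit) also does not work as stated, since the model supplies only one clean qubit, not $m$ of them.

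The paper's proof closes this gap with a unitary flag trick using the single clean qubit: before running $U_x$, apply $\land_m(\PauliX)$ (with negated controls) so that the clean qubit is flipped to $\ket{1}$ exactly on the component of $(\Identity/2)^{\otimes m}$ in which the dirty qubits are all zero; then run $U_x$ on the dirty qubits; then, \emph{at the end}, postselect $\ket{1}$ on the clean qubit together with the original postselection qubit $p$, and measure the output qubit. Because the initial dirty state is diagonal in the computational basis, the flag coherently records the event ``input was $\ket{0^m}$,'' and postselecting it at the end has the same effect as the beginning-of-circuit projection you wanted, with success probability $2^{-m}>0$ folded into $p_{post}$. The gate $\land_m(\PauliX)$ costs only $O(m)$ $\CNOT$ and $\magicT$ gates even with uninitialized ancillas, so the resource bounds you computed survive. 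Your correctness and $p_{post}>0$ analysis is otherwise on the right track, but as written the construction does not live inside the class it is supposed to target.
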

\begin{proof}
For $L\in \PostQuSPACE[c,s]$, there exists an $s$-space uniform family of
unitary quantum circuits $\{ U_x\colon x\in L\}$ followed by postselection that recognize $L$ 
with completeness $c$ and soundness $d$ on input $x$.
Let $p$ and $o$ be the postselection qubit and the output qubit (to be measured), respectively, 
of $U_x$. Suppose that $U_x$ acts on $m\in O(s)$ qubits.
Then, one can simulate $U_x$ by a $\DQC1$ circuit over $m+1$ qubits
with postselection in the standard way.
Namely, first flip the state of the unique clean qubit if the content of all the dirty qubits are zeros by using $\land_m(\PauliX)$. 
Note that $\land_m(\PauliX)$ can be implemented with $O(m)$ CNOT and $T$ gates by using $O(s)$ ancilla qubits in any unknown initial states, e.g., completely mixed states~\cite{BarBenCleDiVMarShoSleSmoWei95PRA}.
Then, apply $U_x$  to the $m$ dirty qubits,
postselect $\ket{1}$ on the clean qubit and the qubit $p$, and then measure the qubit $o$.
It is easy to see that the completeness and soundness are still $c$ and $d$, respectively.
\end{proof}


\section*{Acknowledgements}
I am grateful to Harumichi Nishimura for helpful discussions and answering my questions about his co-authoring paper~\cite{LeGNisYak21TQC}. I am also grateful to Yasuhiro Takahashi and Yuki Takeuchi for valuable comments. This work was partially supported by JSPS KAKENHI Grant Number JP20H05966 and JP22H00522.

\bibliographystyle{plain}
\bibliography{SpaceBoundPost}

\appendix
\section*{Appendix}\label{appdx}
\section{Space and Gate Complexities of \texorpdfstring{$U_+$}{U+}
in the Proof of \autoref{lm:U+}}\label{appdx:GatesInU+}
To show that $U_+$ can be implemented with $2^{O(s)}$ gates in $G$ and acts on $O(s)$ qubits,
we examine each step in \autoref{fig:U+U-}. As for the space complexity, we show that each step requires $O(1)$ \emph{reusable} ancilla qubits,
meaning that the states of the ancilla qubits return to the initial states after each step, and thus the total number of ancilla qubits is bounded by $O(1)$.
Since the implementation of $U_+$ with elementary gates acts on registers $(\reg{R},\reg{C},\reg{D},\reg{K})$  of $O(s)$ qubits except $O(1)$ ancilla qubits,
the number of qubits on which it acts is bounded by $O(s)$.
Now, let us analyze each step.

Step 1 (i.e., $V_x$) uses $2^{O(1)}$ gates in $G$ with $O(1)$ reusable ancilla qubits as we have analyzed in \autoref{subsec:Vx}.

To perform Step 2,  we prepare three ancilla qubits $q_1,q_2,q_3$ initialized to $\ket{0}$.
We first use $\lor_N(\PauliX)$ to flip the state of $q_1$  (i.e., apply an $\PauliX$ gate to $q_1$)  if
the content of $\reg{C}$ is non-zero, 
and use the circuit $W\equiv (\Hadamard\otimes \Identity)\CNOT(\Hadamard\otimes \Identity)$ 
to flip the state of $q_2$ if the first qubit in $\reg{R}$ (i.e., $\reg{R}1$) is in the state $\ket{+}$. 
Then, we use the circuit $(\Identity\otimes \Identity\otimes \PauliX) (\land_2(\PauliX))(\PauliX\otimes X\otimes \Identity)$ on qubits $q_1,q_2,q_3$ 
to take the $\OR$ of the contents of  $q_1,q_2$ and store the result to $q_3$. 
Finally, we apply $\land(\INCREMENT_{2^N})$ to qubit $q_3$ and register $\reg{D}$, where the control qubit is $q_3$. 
The three qubits $q_1,q_2,q_3$ are made to return to the initial states $\ket{0}$ by inverting the $\OR$, $W$, and $\lor_N(\PauliX)$ 
so that they can be reused for later computation. 
It is almost clear from  \autoref{cl:Cont(g)} and \autoref{cl:ContINC}
that Step 2 is implementable with $2^{O(s)}$ gates.
More concretely, by using $O(1)$ reusable ancilla qubits,
$\lor_N(\PauliX)$ can be implemented with $O(N)$  $\PauliX$, $\magicT$, and $\CNOT$ gates by \autoref{cl:Cont(g)}; 
$\OR$ can be implemented  with $O(1)$  $\PauliX$, $\magicT$, and $\CNOT$ gates;
$\land(\INCREMENT_{2^N})$ can be implemented with $O(N^2)\ (=O(s^2))$ of $\CNOT$ and $\magicT$ gates
by \autoref{cl:ContINC}.

Step 3 applies $\land_N(V_x^\dag)$ with each controlling qubit being sandwiched by $\PauliX$ gates.
Since $V_x$ is implemented with $2^{O(s)}$ gates in $G$ by using $O(1)$ reusable ancilla qubits, $\land_N(V_x^\dag)$ can be implemented with $2^{O(s)}$ gates in $G$ 
by using  $O(1)$  ancilla qubits by \autoref{cl:Cont(g)}.

Step 4 applies $\lor_{\ell+N}(\INCREMENT_{2^N})$, which can be implemented with 
$O((\ell+N)+N^2)\ (=O(s^2))$ $\PauliX$, $\magicT$, and $\CNOT$ gates and $O(1)$ reusable ancilla qubits by \autoref{cl:ContINC}.

Step 5 can be implemented with $O((\log T)^2)\ (=O(s^2))$ $\CNOT$ and $\magicT$ gates and $O(1)$ reusable ancilla qubits by \autoref{cl:ContINC}.

Since Steps 1 to 5 are repeated $T+1\in 2^{O(s)}$ times, the total number of gates for these steps is still $2^{O(s)}$. The final step of applying $\INCREMENT^\dag_{T+1}$ $T+1$ times consists of $O(T(\log T)^2)\ (=2^{O(s)})$ $\CNOT$ and $\magicT$ gates with $O(1)$ reusable ancilla qubits.
Consequently, $U_+$ can be implemented with $2^{O(s)}$ gates in $G$ and $O(1)$ reusable ancilla qubits. \hspace*{\fill}\qed

\end{document}